\DeclareMathOperator*{\argmax}{arg\,max}
\newcommand{\II}{\mathbb{I}}
\newcommand{\cA}{\mathcal{A}}
\newcommand{\cR}{\mathcal{R}}
\newcommand{\sD}{\mathsf{D}}
\newcommand{\sQ}{\mathsf{Q}}
\newcommand{\sR}{\mathsf{R}}
\newcommand{\set}[1]{\left\{ #1 \right\}}
\renewcommand{\tilde}{\widetilde}
\newcommand{\floor}[1]{\left\lfloor #1 \right\rfloor}
\newcommand{\polylog}{\mathrm{polylog }}
\newcommand{\polyloglog}{\mathrm{polyloglog }}
\newcommand{\zone}{\set{0,1}}
\newcommand{\king}{\mathsf{KING}}
\newcommand{\usearch}{\mathsf{USEARCH}}
\newtheorem{theorem}{Theorem}[section]
\newtheorem{corollary}[theorem]{Corollary}
\newtheorem{lemma}[theorem]{Lemma}
\newtheorem{claim}[theorem]{Claim}
\newtheorem{defi}[theorem]{Definition}
\newtheorem{observation}[theorem]{Observation}
\newcommand{\ket}[1]{|#1\rangle}
\newcommand{\cbra}[1]{\left\{#1\right\}}
\renewcommand{\epsilon}{\varepsilon}
\newcommand{\MOD}{\mathsf{MOD}}
\title{Randomized and quantum query complexities of finding a king\\ in a tournament}
\author{Nikhil S.~Mande\thanks{University of Liverpool, UK. {\tt nikhil.mande@liverpool.ac.uk}}
\and 
Manaswi Paraashar\thanks{University of Copenhagen, Denmark. 
{\tt manaswi.isi@gmail.com}
}
\and
Nitin Saurabh\thanks{Indian Institute of Technology, Hyderabad, India. 
{\tt nitin@cse.iith.ac.in}
}
}
\date{}
\begin{document}

\maketitle

\begin{abstract}
    A tournament is a complete directed graph. It is well known that every tournament contains at least one vertex $v$ such that every other vertex is reachable from $v$ by a path of length at most $2$. All such vertices $v$ are called \emph{kings} of the underlying tournament. Despite active recent research in the area, the best-known upper and lower bounds on the deterministic query complexity (with query access to directions of edges) of finding a king in a tournament on $n$ vertices are from over 20 years ago, and the bounds do not match: the best-known lower bound is $\Omega(n^{4/3})$ and the best-known upper bound is $O(n^{3/2})$ [Shen, Sheng, Wu, SICOMP'03]. Our contribution is to show \emph{tight} bounds (up to logarithmic factors) of $\widetilde{\Theta}(n)$ and $\widetilde{\Theta}(\sqrt{n})$ in the \emph{randomized} and \emph{quantum} query models, respectively. We also study the randomized and quantum query complexities of finding a maximum out-degree vertex in a tournament.
\end{abstract}

\section{Introduction}
A tournament is a complete directed graph. Many important properties of tournaments were studied by Landau~\cite{landau1953dominance} in the context of modelling dominance relations among a flock of chickens. Relevant to our paper is the notion of a \emph{king} in a tournament. This notion was defined by Maurer~\cite{maurer1980king}, also with the goal of identifying a reasonable measure of dominance to identify a `most dominant' chicken in a flock. Soon after Maurer's article, Reid~\cite{Reid82} showed existence of tournaments in which all vertices are kings. Tournaments also arise naturally in social choice theory where directions of edges depict preferences. A large amount of work has been devoted to defining a notion of a `winner' in a tournament, and determining the complexity of finding such winners. For instance, Dey~\cite{Dey17} studied the complexity of certain tournament solutions with motivations from social choice theory. The monograph by Moon~\cite{moon2015topics} sparked a line of research on tournaments and their structural properties.

A natural computational model to study the complexity of computing specific properties of a tournament, or more generally, a graph, is that of \emph{query complexity}. In this setting an algorithm may query presence/directions of edges in an unknown input graph. The goal is to minimize the number of such queries made in the worst case. There is a rich literature on query complexity of graph problems, starting over 50 years ago~\cite{rosenberg1973time, rivest1976recognizing, yao1987lower, hajnal1991omega, chakrabarti2001improved, durr2006quantum, childs2012quantum, Dey17}. The famous Aanderaa-Karp-Rosenberg conjecture~\cite{rosenberg1973time} or evasiveness conjecture posits that the query complexity of any non-trivial monotone graph property on $n$-vertex graphs has maximal deterministic query complexity, i.e., $\binom{n}{2}$. While the deterministic and randomized variants of this conjecture are wide open, the quantum version was recently resolved in the positive~\cite{ABKRT21} using Huang's breakthrough resolution of the sensitivity conjecture~\cite{huang2019induced}.

Our work deals with the query complexity of certain graph problems. In the next section, we describe the main graph problem of interest to us, and prior work on it.

\subsection{Related work}\label{sec: related}

It is well known that every tournament has at least one vertex $v$ such that every other vertex is reachable from $v$ via a path of length at most 2 (see Lemma~\ref{lemma: king in tournament} for a proof). Such a vertex $v$ is called a \emph{king} in the underlying tournament. Formally, one may define the following relation that captures this definition.
\begin{defi}[Kings in a tournament]\label{defi: king}
    Let $T$ be a complete directed graph on $n$ vertices. Identify the orientation of the tournament with a string $T$ in $\zone^{\binom{n}{2}}$, one variable $(\cbra{i, j}$ with $i \neq j \in [n])$ per edge (between vertex $i$ and vertex $j$) defining its direction. Define the relation $\king_n \subseteq \zone^{\binom{n}{2}} \times [n]$ by
    \[
    (T, v) \in \king_n ~\textnormal{if}~\forall u \in [n], ~\textnormal{either}~ v \rightarrow u ~\textnormal{or}~\exists w \text{ such that } v \rightarrow w \rightarrow u.
    \]
    Here the directions of the edges $v \rightarrow u$ and $v \rightarrow w \rightarrow u$ are as in $T$.
\end{defi}
A natural question arises: what is the query complexity of finding a king in an $n$-vertex tournament? The study of this was initiated by Shen, Sheng and Wu~\cite{SSW03}. They showed an algorithm with query complexity $O(n^{3/2})$ and also showed a non-matching lower bound of $\Omega(n^{4/3})$. For the upper bound, they crucially used the fact that a king in an in-neighbourhood of an arbitrary subset of vertices is also a king in the original tournament (see Lemma~\ref{lem: maurer king in inneighbour}). An outline of their upper bound is as follows: first arbitrarily choose a sub-tournament of a fixed size and find the maximum-out-degree vertex in it by querying all edges in this sub-tournament. Remove this vertex along with its out-neighbours, and proceed iteratively. When the number of remaining vertices is small enough, find a king using brute force (query all the edges in the remaining sub-tournament). Simple manipulation of parameters gives an upper bound of $O(n^{3/2})$. For the lower bound they design an adversary who answers an algorithm's queries using a fixed strategy, and show that every algorithm must make $\Omega(n^{4/3})$ queries in the worst case.  Ajtai et al.~\cite{AFHN15} independently showed the same bounds, in a different context. Despite active recent research in the area (see the next paragraph), these bounds from over 20 years ago remain state-of-the-art. It can be shown that a vertex with maximum out-degree is a king (see Lemma~\ref{lemma: king in tournament} and its proof). However, finding a vertex with maximum out-degree is known to be hard: it has deterministic query complexity $\Omega(n^2)$~\cite{balasubramanian1997finding}.

Biswas et al.~\cite{BJRS22} recently showed that the adversary used by~\cite{AFHN15, SSW03} to show an $\Omega(n^{4/3})$ lower bound cannot be used to prove a stronger lower bound. They additionally showed a query complexity upper bound of $O(n^{4/3})$ on finding a vertex from which at least half of all vertices are reachable by paths of length at most 2. They also considered variants of kings, and the complexity of finding such vertices. In a more recent work, Lachish, Reidl and Trehan~\cite{LRT22} showed an $O(n^{4/3})$-query algorithm to find a vertex from which at least $(\frac12 + \frac{2}{17})$ of the vertices are reachable by paths of length at most 2.

\subsection{Our contributions}\label{sec: contribs}
While the question of pinning down the deterministic query complexity of finding a king has been open and unimproved since the work of Shen, Sheng and Wu~\cite{SSW03}, the corresponding question in the randomized and quantum query models does not seem to have been studied in the literature. Our contribution is to give tight bounds in these models. We refer the reader to Section~\ref{sec: prelims} for a formal description of these models. Let $\sR(\cdot)$ and $\sQ(\cdot)$ denote bounded-error randomized and quantum query complexity, respectively. Our main theorems are as follows.
\begin{theorem}
    For all positive integers $n$,
    \begin{align*}
        \sR(\king_n) &= O(n \log \log n), \qquad \sR(\king_n) = \Omega(n),\\
        \sQ(\king_n) &= O(\sqrt{n}~\polylog(n)), \qquad \sQ(\king_n) = \Omega(\sqrt{n}).
    \end{align*}
\end{theorem}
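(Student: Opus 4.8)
The plan is to establish both upper bounds through a single recursive scheme based on the two structural facts already available: a maximum-out-degree vertex is a king (Lemma~\ref{lemma: king in tournament}), and a king of the sub-tournament induced on the in-neighbourhood of a vertex is a king of the whole tournament (Lemma~\ref{lem: maurer king in inneighbour}). Maintaining a ``live'' vertex set $S$ (initially $[n]$), I would repeatedly choose a pivot $u \in S$ whose in-degree inside $S$ is at most, say, $\tfrac34\abs{S}$, and then replace $S$ by its in-neighbourhood $N^-_S(u) = \set{v \in S : v \to u}$. Maurer's lemma guarantees that a king of $N^-_S(u)$ is a king of $S$, so by induction the vertex surviving when $\abs{S}$ becomes constant is a king of the original tournament; correctness is therefore \emph{deterministic}, and randomness or quantum access is used only to locate pivots quickly. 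Since each step shrinks $\abs{S}$ by a constant factor, there are only $O(\log n)$ levels and the sizes $\abs{S}$ decay geometrically.

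The engine of the scheme is a \emph{reducing pivot}, and the first step is to show these are plentiful: since the average in-degree in $S$ is $(\abs{S}-1)/2$, Markov's inequality shows that at least a $\tfrac13$ fraction of the vertices of $S$ have in-degree at most $\tfrac34\abs{S}$. In the randomized model I would then draw a uniformly random $v \in S$, query all $\abs{S}-1$ edges at $v$ to compute its exact in-degree, and keep it if it is reducing; the expected number of trials is $O(1)$, so a level costs $O(\abs{S})$ queries in expectation and the geometric sum over levels telescopes to $O(n)$. Converting this expected bound into a worst-case bounded-error guarantee, together with the amplification needed so that all $O(\log n)$ levels succeed, is what I expect to inflate the bound to $\wt{O}(n)$; squeezing the overhead down to the stated $O(n \log\log n)$ is a delicate but lower-order accounting task rather than a conceptual one.

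For the quantum bound the same recursion is used, but a level must cost only $\wt{O}(\sqrt{\abs{S}})$ so that $\sum_t \wt{O}(\sqrt{\abs{S_t}}) = \wt{O}(\sqrt n)$. Here the principal obstacle is that one cannot afford to \emph{materialize} $N^-_S(u)$, since merely listing it costs $\Omega(\abs{S})$; instead I would represent the live set implicitly, noting that after pivots $u_0,\dots,u_{t-1}$ have been fixed a vertex $v$ lies in the current set precisely when $v \to u_j$ for all $j < t$, a membership test of cost $O(\log n)$. A reducing pivot is then located by amplitude amplification over this implicitly described set, using quantum approximate counting to estimate a candidate's in-degree; the quadratic speedup of Grover search against the constant density of reducing pivots, combined with the geometric decay of the set sizes, is what collapses the per-level $\sqrt{\,\cdot\,}$ costs into $\wt{O}(\sqrt n)$ overall. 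Controlling the polylogarithmic factors from bounded-error counting and from the recursion depth is the main technical burden on this side.

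Finally, for the lower bounds I would give a query-efficient reduction from unstructured search (equivalently $\OR$) on $\Theta(n)$ bits. From an instance $x$ I would build a tournament $T_x$ in which each edge direction depends on at most one coordinate of $x$---so that one edge query is answerable with one query to $x$---and in which the \emph{location} of the king set encodes both whether $\OR(x)=1$ and the position of a set coordinate, so that any algorithm returning a king of $T_x$ solves the search problem. The standard $\Omega(n)$ randomized and $\Omega(\sqrt n)$ quantum lower bounds for $\OR$ then transfer to $\king_n$. The main obstacle here is designing the planted gadget and verifying, directly from the definition of kings, that the king set behaves exactly as claimed under the planting; I expect this verification, rather than the reduction bookkeeping, to be the crux.
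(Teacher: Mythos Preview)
Your plan is essentially the paper's: the same pivot-and-recurse scheme driven by Lemma~\ref{lem: maurer king in inneighbour}, $\Theta(\log\log n)$ independent samples per level to drive the per-level failure probability to $1/\polylog(n)$ (this is exactly how the $O(n\log\log n)$ arises), the same implicit encoding of the live set via the list of past pivots in the quantum case, and a reduction from unique search for the lower bounds. Two small corrections are worth noting.

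First, in the quantum analysis your target of $\wt O(\sqrt{\abs{S_t}})$ per level is not what the implicit scheme actually delivers, and is not what the paper claims either. Grover sampling from the current set and approximate counting of a candidate's in-degree both run over the full ground set $[n]$ with $\abs{S_t}$ (or fewer) marked elements, so a level costs $\wt O(\sqrt{n})$, \emph{not} $\wt O(\sqrt{\abs{S_t}})$; in fact the cost tends to \emph{grow} as $\abs{S_t}$ shrinks. The argument is rescued not by geometric decay of per-level cost but by the fact that there are only $O(\log n)$ levels and the recursion is cut off once $\abs{S_t}$ drops below $\polylog(n)$ (at which point the remaining vertices are listed explicitly via Corollary~\ref{cor: multiple grover}). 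Your telescoping $\sum_t \sqrt{\abs{S_t}}$ picture, while arithmetically tidy, does not match the mechanism.

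Second, the lower-bound gadget is much simpler than you anticipate: take any tournament in which vertex $n$ is a source (hence the unique king), and for $j\in[n-1]$ let $V_j$ be the set of edges one must flip to make $j$ the source instead. These blocks are pairwise disjoint and partition all edges, so an input $y\in\zone^{n-1}$ of Hamming weight at most $1$ maps to a tournament whose unique king is the index of the set bit (or $n$ if $y=0$). No structural verification beyond ``a source is the unique king'' is needed; the crux you flagged is essentially absent.
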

We mentioned earlier that a vertex of maximum out-degree in a tournament is a king, and finding a vertex of maximum out-degree is known to have deterministic query complexity $\Omega(n^2)$. We show that even the randomized query complexity is $\Omega(n^2)$, and we also show bounds for the quantum query complexity of this task. Define the relation $\MOD_n \subseteq \zone^{\binom{n}{2}} \times [n]$ to consist of the elements $(T, v)$ where $v$ is a maximum out-degree vertex in the $n$-vertex tournament described by $T$.
\begin{theorem}
    For all positive integers $n$,
    \begin{align*}
        \sR(\MOD_n) & = \Theta(n^2),\qquad
        \sQ(\MOD_n) = O(n^{3/2}), \qquad \sQ(\MOD_n) = \Omega(n).
    \end{align*}
\end{theorem}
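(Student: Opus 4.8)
The plan is to establish the four bounds separately, with the randomized lower bound being the main obstacle and the other three being comparatively routine.

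\textbf{Upper bounds.} The bound $\sR(\MOD_n) = O(n^2)$ is immediate: query all $\binom{n}{2}$ edges, compute every out-degree, and output an argmax. For $\sQ(\MOD_n) = O(n^{3/2})$ I would invoke quantum maximum finding (Dürr--Høyer), which locates an argmax of a function $f \colon [n] \to \mathbb{Z}$ using $O(\sqrt{n})$ evaluations of $f$ in superposition. Taking $f(v)$ to be the out-degree of $v$, each evaluation is computed \emph{exactly} and reversibly from the $n-1$ edges incident to $v$ with $O(n)$ queries to the edge oracle, so the total cost is $O(\sqrt{n}\cdot n) = O(n^{3/2})$; composing a zero-error inner routine inside the bounded-error outer search is standard and incurs no error blow-up.

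\textbf{Randomized lower bound (the crux).} For $\sR(\MOD_n)=\Omega(n^2)$ I would apply Yao's minimax principle with a distribution that reduces the task to unstructured search. Fix (for odd $n$) a \emph{regular} base tournament $T_0$ in which every out-degree equals $m=(n-1)/2$, e.g.\ the circulant tournament $i \to j$ iff $j - i \bmod n \in \{1,\dots,m\}$. Sample a uniformly random edge $e^*$ of $T_0$ and reverse its orientation; this raises the out-degree of one endpoint to $m+1$ and lowers the other's to $m-1$, so the resulting tournament has a \emph{unique} maximum-out-degree vertex $w$, namely the endpoint out of which $e^*$ now points. Since $T_0$ is fixed and known to the algorithm, any query to an edge $e \neq e^*$ returns exactly $T_0(e)$ and conveys no information; only a query to $e^*$ is informative. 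Consider any deterministic algorithm making $q$ queries: conditioned on not having queried $e^*$, its transcript coincides with the run on $T_0$ itself, so it follows a fixed query sequence $e_1,\dots,e_q$ and must output a fixed vertex $v^*$. Hence it succeeds only if $e^* \in \{e_1,\dots,e_q\}$ (probability $\le q/\binom{n}{2}$) or $e^*$ is one of the $m$ in-edges of $v^*$ (probability $\le m/\binom{n}{2}$), so its success probability is at most $(q+m)/\binom{n}{2}$. Requiring this to be at least $2/3$ forces $q = \Omega(n^2)$. The point making the bound quadratic rather than linear is that each vertex is the planted winner for only $m = \Theta(n)$ of the $\Theta(n^2)$ possible reversed edges, so guessing $w$ without locating $e^*$ succeeds with probability only $\Theta(1/n)$; the informative content of the input is genuinely one hidden edge among $\binom{n}{2}$. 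Even $n$ is handled by a near-regular base or by padding with a dummy vertex.

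\textbf{Quantum lower bound.} For $\sQ(\MOD_n)=\Omega(n)$ I would reduce from the majority function $\mathrm{MAJ}_k$ on $k=\Theta(n)$ bits, whose bounded-error quantum query complexity is $\Theta(k)$. Given $x \in \zone^k$, build a tournament on three tiers: two candidates $a,b$; a set $U$ of $k$ \emph{bit} vertices; and a set $F$ of $f = n-2-k$ \emph{filler} vertices chosen with $f > k$ (so $k = \Theta(n)$). Orient $a,b$ to beat all of $F$ and set $a \to b$; make $F$ regular internally, beating all of $U$ and losing to $a,b$; make $U$ regular internally and losing to all of $F$; and encode $x_u$ by the single edge $\{a,u\}$, taking $a \to u, u \to b$ when $x_u=1$ and $u \to a, b \to u$ when $x_u=0$. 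A short degree computation shows that when $f>k$ the out-degrees of $a$ and $b$ dominate those of every filler and bit vertex, so the maximum-out-degree vertex is $a$ or $b$; moreover $\mathrm{outdeg}(a)-\mathrm{outdeg}(b) = 2|x|-k+1$ where $|x|$ is the Hamming weight, so with $k$ even (which also precludes ties) the winner is $a$ precisely when $|x| \ge k/2$, i.e.\ the winner's identity computes $\mathrm{MAJ}_k(x)$. Since each bit is read by a single edge query, any $\MOD_n$ algorithm yields a $\mathrm{MAJ}_k$ algorithm of the same cost, giving $\sQ(\MOD_n) = \Omega(k) = \Omega(n)$. The only delicate part here is the degree bookkeeping that keeps $a,b$ on top for all inputs, but this is routine once $f>k$ is imposed; the same gadget also re-proves $\sR(\MOD_n)=\Omega(n)$, subsumed by the quadratic bound above.
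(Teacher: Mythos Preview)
Your upper bounds and the randomized lower bound match the paper almost exactly: the paper also uses D\"urr--H{\o}yer for the $O(n^{3/2})$ quantum upper bound, and for $\sR(\MOD_n) = \Omega(n^2)$ it uses the same Yao distribution (flip a uniformly random edge of a fixed regular tournament), analyzing the leaf consistent with the unflipped tournament rather than tracking a fixed query sequence as you do, but this is only a stylistic difference.

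Your quantum lower bound takes a genuinely different route. The paper reuses the same family of hard instances as in the randomized case and reduces from \emph{Search on $\binom{n}{2}$ bits} with Hamming weight at most $1$: given such an $x$, it runs the assumed $\MOD$ algorithm on $G \oplus x$ (with $G$ regular), then Grover-searches the $n-1$ edges incident to the returned vertex to locate the flipped edge. This costs $\sQ(\MOD_n) + O(\sqrt{n})$ and solves Search on $\binom{n}{2}$ bits, so $\sQ(\MOD_n) = \Omega(n)$. The paper's reduction is arguably cleaner since it recycles the randomized hard distribution and needs no new gadget; your Majority reduction is also valid and has the appeal of directly encoding a function with linear quantum complexity into the identity of the winner, with no post-processing search step.

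One small technical wrinkle in your gadget: you take $k$ even (to avoid ties between $a$ and $b$) and simultaneously ask that $U$ be ``regular internally'', but a regular tournament exists only on an odd number of vertices; the same issue can arise for $F$. You should say near-regular (out-degrees within $1$ of each other); the degree inequalities you need, namely that $\max(d^+(a),d^+(b))$ exceeds every out-degree in $U\cup F$, still go through under your hypothesis $f > k$ since the internal out-degrees shift by at most $1$. This is a bookkeeping fix, not a gap in the argument.
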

We suspect that $\sQ(\MOD_n) = \Theta(n^{3/2})$, but we leave open the problem of closing the gap between the upper and lower bounds in the quantum setting.
\subsubsection{Sketch of randomized upper bound for finding a king}
As mentioned in Section~\ref{sec: related}, the upper bound of Shen, Sheng and Wu crucially uses the fact that a king in the in-neighbourhood of an arbitrary vertex is also a king in the original tournament (Lemma~\ref{lem: maurer king in inneighbour}). A simple counting argument shows that a \emph{uniformly random} vertex in an $n$-vertex tournament has out-degree $\Omega(n)$ with high probability. This suggests a natural randomized iterative algorithm: in each step sample a few vertices and query all edges incident on them, until a vertex with large out-degree in the current sub-tournament is seen. We then remove this vertex along with all its out-neighbours from the tournament, and iterate. Since a random vertex has out-degree that is linear in the number of vertices with high probability, this process results in a small sub-tournament (with at most $\sqrt{n}$ vertices) after $O(\log n)$ iterations. At this point we can afford to query the entire remaining sub-tournament to find a king in it, and it can be shown by applying Lemma~\ref{lem: maurer king in inneighbour} iteratively that this vertex is also a king the original tournament.

\subsubsection{Sketch of quantum upper bound for finding a king}
Our quantum algorithm follows the same structure as our randomized one, but we run into some issues during a naive simulation. The following are the issues, along with how we handle them:
\begin{itemize}
    \item When trying to sample a vertex with high out-degree, we cannot afford to query all edges incident on a vertex to compute its out-degree since our algorithm needs to have query complexity essentially $O(\sqrt{n})$. To circumvent the need of querying all edges incident on a vertex to compute its in-degree, we use the subroutine of \emph{approximate counting}~\cite{brassard2002quantum} that returns an approximation of the in-degree but offers a quadratic speedup. It may seem like one could use a classical algorithm for approximate counting here, but such a classical sampling-based algorithm would require $\widetilde{\Omega}(n)$ queries if the number of in-neighbours is small, say $\polylog(n)$ (see the fourth bullet as to why such a case may arise).
    \item A second issue that arises is when we need to sample a vertex from the current sub-tournament. It is no longer clear how to do this in the quantum setting since we do not explicitly know the vertices remaining. However, we keep track of the set of vertices $W$ whose out-neighbours have effectively been removed in previous iterations. The number of iterations of the algorithm, and hence the size of $W$, is bounded by $O(\log n)$. We then use key properties of Grover's search algorithm: first set up a uniform superposition over all vertices. Next we `mark' the vertices in the current sub-tournament (call such vertices `good') using $O(\log n)$ queries: for a given vertex we only need to check if it is an out-neighbour of any of the vertices in $W$. Making these queries in superposition allows us to mark the good vertices in $O(\log n)$ queries. We then apply the Grover iterate a suitable number (at most $O(\sqrt{n})$) of times. At this point we make a measurement in the computational basis: by the correctness of Grover's algorithm, the probability of seeing a good vertex (i.e., a vertex in the current sub-tournament) is large.  The structure of Grover's algorithm implies that conditioned on not seeing a bad vertex, each good vertex is seen with equal probability. This effectively simulates sampling a uniformly random vertex from the current sub-tournament.
    \item Having sampled a vertex $v$ from the in-neigbourhood of $W$, the randomized algorithm next computes the out-degree of $v$ in the in-neighbourhood of $W$. We cannot afford to do this exactly since we do not explicitly know the in-neighbourhood of $W$, and moreover it may be very large. We are able to get around this using similar ideas to that in the previous bullet.
    \item Finally, it is no longer clear how to do the final brute-force step in the last remaining sub-tournament since we do not explicitly know the remaining $\sqrt{n}$ vertices. To handle this, we first modify the randomized algorithm so as to only have $O(\polylog(n))$ vertices remaining in this `brute-force' step, while still having only $O(\log n)$ iterations overall. Thus, the query complexity so far is still $\widetilde{O}(\sqrt{n})$. We can then use Grover's search repeatedly (or an improvement thereof, Theorem~\ref{theo: Grover search find k elements}) to find all the remaining vertices with high probability in $\widetilde{O}(\sqrt{n})$ queries. At this point we can find a king in the remaining sub-tournament using $O(\polylog(n))$ queries. By the same argument as in the randomized case, this vertex is also a king in the original tournament.
\end{itemize}

\subsubsection{Sketch of lower bounds for finding a king}
To show our lower bounds, we restrict our attention to a special class of tournaments, described below. Fix an arbitrary tournament $T$ on $n$ vertices where vertex $n$ is a source. This immediately implies that vertex $n$ is the unique king. For each $i \in [n-1]$, define the tournament $T_i$ to be $T$ with edges incident on vertex $i$ flipped so as to make vertex $i$ the source. Note that these sets of edges are disjoint for every $i \neq j$. If we assign one variable to each such set and promise that at most one of them has value 1 (i.e., has edges in the opposite direction from those in $T$), an algorithm that finds a king in these tournaments (which are unique by construction) also solves the Search problem on $n-1$ input variables. Our lower bounds of $\Omega(n)$ and $\Omega(\sqrt{n})$ on the randomized and quantum query complexities, respectively, then immediately follow from corresponding well-known lower bounds on the complexity of the Search problem.

\subsubsection{Sketch of bounds for finding a maximum out-degree vertex}
In the randomized setting, we use Yao's minimax principle (Lemma~\ref{lem: yao}). By this principle, it suffices to exhibit a hard distribution on input tournaments such that any deterministic algorithm with small query complexity must make large error when inputs are drawn from this distribution. We now describe the distribution: fix an $n$-vertex regular  tournament, say $T$ with $n$ odd and each vertex having out-degree exactly $(n-1)/2$ (such a tournament is easy to construct iteratively, for example) and flip a uniformly random edge of $T$. This causes a unique vertex of the new tournament to be a maximum out-degree vertex. Intuitively, finding this vertex is as hard as finding the edge that has been flipped, and it is well known that searching for a marked element among $k$ elements has randomized query complexity $\Omega(k)$. We formalize this argument in Theorem~\ref{thm: mod}. Our quantum lower bound uses similar ideas, and involves a reduction from the Search problem on an $\binom{n}{2}$-bit string, which has quantum query complexity $\Omega(n)$~\cite{boyer1998tight}. For the quantum upper bound, we use a maximum finding routine over the degrees of the vertices. Each degree can be computed using $n-1$ queries, and the maximum can be found in $O(\sqrt{n})$ queries~\cite{durr1996quantum}, giving us an $O(n^{3/2})$ upper bound.

\section{Preliminaries}\label{sec: prelims}
All logarithms in this paper are base 2. We use the notation $\polylog(n)$ to denote a quantity that is $\log^{c}n$ for a constant $c > 0$ (independent of $n$). For a positive integer $n$, we use the notation $[n]$ to denote the set $\cbra{1, 2, \dots, n}$. For an event $X$, let $\II[X]$ denote the indicator of $X$, i.e., $\II[X] = 1$ if $X$ occurs, and $\II[X] = 0$ if $X$ does not occur.

\subsection{Tournaments}
A tournament $T$ on a vertex set $V$ is a complete graph such that each edge is directed. Throughout this paper, unless mentioned otherwise, we consider tournaments $T$ on $n$ vertices and denote the vertex set by $V = [n]$. Such  a tournament has $\binom{n}{2}$ directed edges. We identify an $n$-vertex tournament with a binary string in $\zone^{\binom{n}{2}}$: an element of $[n]$ corresponds to the label of a vertex, and there is one variable ($\cbra{i, j}$ with $i \neq j \in [n]$) per edge (between vertex $i$ and vertex $j$) that defines its direction. For a tournament $T$ and vertex $v \in V$, let $N^-(v)$ denote the set of in-neighbours of $v$,
i.e., $N^-(v) = \cbra{u \in [n] \setminus \cbra{v} | u \rightarrow v \textnormal{ is an edge in }T}$ and let $N^+(v)$ denote the set of out-neighbours of $v$ (i.e., $\cbra{u \in [n] \setminus \cbra{v} | v \rightarrow u \textnormal{ is an edge}})$. Also, let $d^+(v) = |N^+(v)|$ and $d^-(v) = |N^-(v)|$ denote the out-degree and in-degree of $v$, respectively.
Since $T$ is a tournament, $d^+(v) + d^-(v) = (n-1)$ for all $v \in V$. For $S \subseteq V$, let $T[S]$ be the tournament induced on the vertices in $S$.
For a subset $W \subseteq V$, define $W^- = \{v \in V \mid v \rightarrow w \text{ is an edge for all } w \in W\}$. If $W = \emptyset$ then define $W^- = V$.
A vertex $v \in V$ is a \textit{king} if every vertex in $V \setminus\{v\}$ is reachable from $v$ by a path of length at most $2$. This is formally captured in Definition~\ref{defi: king} and repeated below for convenience.
Define the relation $\king_n \subseteq \zone^{\binom{n}{2}} \times [n]$ by $(G, v) \in \king_n ~\textnormal{if}~\forall u \in [n] \setminus \cbra{v}, ~\textnormal{either}~ v \rightarrow u ~\textnormal{or}~\exists w: v \rightarrow w \rightarrow u$.
Here the directions of the edges $v \rightarrow u$ and $v \rightarrow w \rightarrow u$ are as in the tournament $G$.
A well-known fact about tournaments is that every tournament has a king. We give a proof for completeness.

\begin{lemma}[Folklore]\label{lemma: king in tournament}
    Let $T \in \zone^{\binom{n}{2}}$ be a tournament. Then there exists a vertex $v \in [n]$ such that $(T, v) \in \king_n$.
\end{lemma}
\begin{proof}
    Consider a vertex $v$ of maximum out-degree. We show that such a vertex is a king. Consider the partition of $V$ into three disjoint sets: $\{v\}$, $N^+(v)$ and $N^-(v)$. Clearly, every vertex in $N^+(v)$ is at a distance at $1$ from $v$. Towards a contradiction, assume that there is a vertex $w$ in $N^-(v)$ such that there is no path of length $2$ of the form $v \rightarrow u \rightarrow w$, for some $u \in N^+(v)$. Thus every vertex in $N^+(v)$ is an out-neighbour of $w$. Since $v$ is also an out-neighbour of $w$, the out-degree of $w$ is greater than that of $v$, which is a contradiction.
\end{proof}

The above lemmas shows that any vertex with maximum out-degree in a tournament is a king in that tournament. However, as discussed in Section~\ref{sec: related}, finding a vertex of maximum out-degree is known to be hard.
We need the following result due to~\cite{maurer1980king}.

\begin{lemma}[\cite{maurer1980king}]
\label{lem: maurer king in inneighbour}
Let $T \in \zone^{\binom{n}{2}}$ be a tournament and $v \in [n]$. If a vertex $u$ in $N^{-}(v)$ is a king in $T[N^{-}(v)]$, then $u$ is a king in $T$.
\end{lemma}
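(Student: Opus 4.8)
The plan is to prove the lemma directly from the definition of a king by analyzing how a path of length at most $2$ from $u$ inside the sub-tournament $T[N^-(v)]$ extends to a path of length at most $2$ in the full tournament $T$. Let me set up the key structural observation first: since $u \in N^-(v)$, the edge $u \to v$ is present in $T$, so $v$ itself is reachable from $u$ in one step in $T$. This handles the target vertex $v$. It remains to show that every vertex $w \in V \setminus \{u, v\}$ is reachable from $u$ by a path of length at most $2$ in $T$, and I would split this into two cases according to whether $w$ lies inside $N^-(v)$ or not.

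First I would handle the easy case $w \in N^-(v)$. By hypothesis $u$ is a king in $T[N^-(v)]$, so there is a path of length at most $2$ from $u$ to $w$ using only edges of the induced sub-tournament $T[N^-(v)]$. Since every such edge is also an edge of $T$ (induced sub-tournaments inherit edge directions), this same path witnesses reachability in $T$, so $w$ is reachable from $u$ in $T$ by a path of length at most $2$. The slightly more delicate case is $w \in N^+(v)$, i.e., $v \to w$ is an edge in $T$. Here I would exhibit the explicit length-$2$ path $u \to v \to w$: the first edge $u \to v$ exists because $u \in N^-(v)$, and the second edge $v \to w$ exists because $w \in N^+(v)$. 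Thus every out-neighbour of $v$ is reachable from $u$ via $v$.

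Putting the cases together: the vertex set $V$ decomposes as $\{v\} \sqcup N^+(v) \sqcup N^-(v)$ (with $u \in N^-(v)$), and I have shown that $v$, every vertex of $N^+(v)$, and every vertex of $N^-(v)$ other than $u$ itself is reachable from $u$ by a path of length at most $2$ in $T$. Hence $u$ is a king in $T$, as claimed.

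I do not expect a genuine obstacle here, since the argument is a short case analysis. The only point requiring mild care is the bookkeeping of which vertices are covered by which case and ensuring that the induced-subgraph paths in case~$w \in N^-(v)$ genuinely lift to paths in $T$ — this is immediate because $T[N^-(v)]$ is an induced sub-tournament and so shares edge orientations with $T$, but it is the one place where one must be explicit rather than wave hands. The clean conceptual reason the lemma holds is that routing through $v$ (a single out-neighbour of $u$ guaranteed by $u \in N^-(v)$) reaches all of $N^+(v)$ in two steps, while the king property inside $N^-(v)$ covers everything else.
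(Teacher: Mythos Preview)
Your proof is correct and follows essentially the same approach as the paper: partition $V$ into $\{v\}$, $N^+(v)$, and $N^-(v)$, reach $v$ and all of $N^+(v)$ via the path $u \to v \to w$, and cover $N^-(v)$ using the king property of $u$ in the induced sub-tournament. The paper's argument is just a more terse version of exactly this case analysis.
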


The proof of the above lemma is easy: If $u$ is a king of the tournament $T[N^-(v)]$, then every vertex in $N^-(v)$ is at a distance at most $2$ from $u$. Also, since $u$ is an in-neighbour of $v$, every vertex in $N^+(v)$ is at a distance $2$ from $u$.
We also need the following lemma from~\cite{maurer1980king}.

\begin{lemma}[\cite{maurer1980king}]
\label{lemma: degree sum in tournament}
    Let $T \in \zone^{\binom{n}{2}}$ be a tournament.
    $\sum_{i=1}^n d^+(i) = \sum_{i=1}^n d^-(i) = \binom{n}{2}$.
\end{lemma}

We also need the following observation on the structure of a tournament (see e.g.,~\cite{balasubramanian1997finding}).
\begin{lemma}
    \label{lemma: vertices with outdegree k}
    Let $T \in \zone^{\binom{n}{2}}$ be a tournament and $k \geq 0$. Then, the number of vertices $v$ such that $d^{+}(v) \leq k$ is at most $2k+1$. 
\end{lemma}

\subsection{Query complexity}\label{subsec: query comp}
A deterministic decision tree $T$ on $m$ variables is a binary tree where the internal nodes are labeled by variables and leaves are labeled with elements of a set $\cR$. Each internal node has a left child, corresponding to an edge labeled 0, and a right child corresponding to an edge labeled $1$. On an input $x \in \zone^m$, $T$'s computation traverses a path from root to leaf as follows. At an internal node, the variable associated with that node is \emph{queried}: if the value obtained is $0$, the computation moves to the left child, otherwise it moves to the right child. The output of $T$ on input $x$, denoted by $T(x)$, is the label of leaf node reached.
We say that a decision tree $T$ computes the relation $f \subseteq \zone^{m} \times \cR$ if $(x, T(x)) \in \cR$ for all $x \in \zone^m$.
The deterministic query complexity of $f$, is 
$\sD(f) := \min_{T : T \text{ computes } f} \textnormal{depth}(T)$.
A randomized decision tree $\mathcal{A}$ is a distribution $\mathcal{D}_{\mathcal{A}}$ over deterministic decision trees. On input $x \in \zone^m$, the computation of $\mathcal{A}$ proceeds by first sampling a deterministic decision tree $T$ according to $\mathcal{D}_{\mathcal{A}}$, and outputting the label of the leaf reached by $T$ on $x$. We say $\mathcal{A}$ computes $f$ with bounded error if for every input $x$, $\Pr[(x, \cA(x)) \in \cR] \geq 2/3$. The randomized query complexity of $f \subseteq \zone^m \times \cR$ is defined as follows.
$\cR(f) = \min_{\substack{\mathcal{A} \textrm{~computing~} f \\ \textrm{~with error~} \le 1/3}} ~\max_{T: \mathcal{D}_{\mathcal{A}}(T) > 0}\textnormal{depth}(T)$.

\subsection{Preliminaries for quantum query complexity}
\label{section: Preliminaries for quantum query complexity}
We refer the reader to~\cite{NC16, lecturenotes} for basics of quantum computing. A quantum query algorithm $\mathcal{A}$ computing a relation $f \subseteq \zone^m \times \cR$ begins in an input-independent initial state $\ket{\psi_0}$, applies a sequence of unitaries $U_0, O_x, U_1, O_x, \cdots, U_T$, and performs a measurement. Here, the unitaries $U_0, U_1, \dots, U_T$ are independent of the input. The unitary operation $O_x$ represents the `query' operation, and maps $\ket{i}\ket{b}$ to $\ket{i}\ket{b \oplus x_i}$ for all $i \in [m]$ and $\ket{0}$ to $\ket{0}$.
We say that $\mathcal{A}$ is a bounded-error algorithm computing $f$ if for all $x \in \zone^m$, the probability of outputting $b \in \cR$ such that $(x, b) \in f$ is at least $2/3$. The bounded-error quantum query complexity of $f$, denoted by $\sQ(f)$, is the least number of queries required for a quantum query algorithm to compute $f$ with error at most $1/3$.

We also need some basic notions from Grover's search algorithm~\cite{Grover96}, a fundamental quantum algorithm, referring the reader to~\cite[Chapter 7]{lecturenotes} for more details. In the search problem, a quantum algorithm is given quantum query access to a string $x \in \zone^n$. It is convenient to work with the `phase-query' unitary $O_{x, \pm}$ which satisfies $O_{x, \pm} \ket{i} = (-1)^{x_i} \ket{i}$. 
The goal is to find an $i \in [n]$ such that $x_i = 1$ with probability at least $2/3$ if such an $i$ exists, otherwise return that there is no such element. An $i$ which satisfies $x_i = 1$ is also called a \textit{marked element} and thus the goal is to find a marked element with high probability, if such an element exists.

Let $t := |\{i \in [n] : x_i = 1\}|$. Grover's algorithm starts with the uniform superposition $\ket{U} = \frac{1}{\sqrt{n}} \sum_{i = 1}^{n} \ket{i}$,
and proceeds by applying Grover's iterate (which is an application of $O_{x, \pm}$ followed by
a reflection about $\ket{U}$) several times. After $k$ applications of Grover's iterate the resulting state is
\begin{align}
    \sin((2k+1)\theta) \sum_{i : x_i = 1} \ket{i} + \cos((2k+1)\theta) \sum_{i : x_i = 0} \ket{i}, \label{eqn: state after Grover iterate}
\end{align}
where $\theta = \arcsin(\sqrt{t/n})$. It is known that Grover's algorithm finds a marked element in $x$ (if it exists) with $O(\sqrt{n})$ applications of the query oracle $O_{x,\pm}$, and probability at least $2/3$.
Standard error reduction yields the following theorem.

\begin{theorem}
\label{theo: Grover search}
Given query access to $x \in \zone^n$, there is a quantum algorithm that decides whether the Hamming weight of $x$ is $0$ or returns an $i \in [n]$ such that $x_i = 1$, with error at most $\delta$. The query complexity of this algorithm is $O(\sqrt{n \cdot \log(1/\delta)})$.
\end{theorem}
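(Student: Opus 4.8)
The plan is to treat this as a routine success-amplification of Grover's search, built directly on top of the bare algorithm quoted just above the statement (which finds a marked element, if one exists, with probability at least $2/3$ using $O(\sqrt{n})$ queries). The key enabler is that a candidate index can be \emph{verified} with a single extra query: having run the basic Grover routine to obtain some index $i \in [n]$, I would query $x_i$ and accept $i$ only if $x_i = 1$. The amplified algorithm then repeats this ``search-then-verify'' step $r$ times: it outputs the first index that is verified to be marked, and if no trial produces a verified marked index, it declares that the Hamming weight of $x$ is $0$.

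The correctness analysis is clean precisely because verification turns the procedure into a \emph{one-sided}-error algorithm. If $x$ has Hamming weight $0$, then no index can ever pass verification, so the algorithm declares ``weight $0$'' with certainty and never errs. If $x$ has at least one marked element, then each trial independently returns a marked element (which then passes verification) with probability at least $2/3$ by the guarantee of the basic routine; the algorithm errs only in the event that all $r$ trials simultaneously fail to find a marked index, which happens with probability at most $(1/3)^{r}$. Choosing $r = \Theta(\log(1/\delta))$ makes this at most $\delta$. Since each trial costs $O(\sqrt{n})$ queries plus one verification query, this already establishes error $\delta$ with $O(\sqrt{n}\,\log(1/\delta))$ queries.

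The one subtle point, and the step I expect to be the main obstacle, is that naive repetition spends a full factor of $\log(1/\delta)$, whereas the statement asks for only $\sqrt{\log(1/\delta)}$ inside the square root. Shaving $\log(1/\delta)$ down to $\sqrt{\log(1/\delta)}$ cannot be achieved by independent repetition of the constant-error routine; it requires the dedicated small-error search construction, which amplifies the one-sided success probability with only a square-root overhead in $\log(1/\delta)$ and is known to be optimal for this problem. For every application in this paper the target error is inverse-polynomial in $n$, so $\log(1/\delta) = O(\log n)$ and both $O(\sqrt{n}\,\log(1/\delta))$ and $O(\sqrt{n\log(1/\delta)})$ collapse to $\widetilde{O}(\sqrt{n})$; hence I would carry out the repetition argument above to exhibit the mechanism explicitly and invoke the sharper small-error bound as a black box to match the stated dependence.
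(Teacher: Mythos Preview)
Your proposal is correct and in fact considerably more detailed than the paper's own treatment: the paper does not give a proof of this theorem at all, merely stating that ``standard error reduction yields the following theorem'' after recalling that the basic Grover routine succeeds with probability at least $2/3$ using $O(\sqrt{n})$ queries. Your search-then-verify repetition argument is exactly the standard mechanism the paper is gesturing at, and you correctly observe the one subtlety the paper glosses over: naive independent repetition only gives $O(\sqrt{n}\,\log(1/\delta))$, whereas the stated $O(\sqrt{n\log(1/\delta)})$ bound requires the sharper small-error search construction (due to Buhrman, Cleve, de Wolf and Zalka). Your plan to exhibit the repetition argument and then invoke that sharper bound as a black box is appropriate, and your remark that for all uses in the paper $\log(1/\delta)=O(\log n)$ so both bounds are $\widetilde{O}(\sqrt{n})$ is also on point.
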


Grover's algorithm is known to be asymptotically optimal.
\begin{theorem}[{\cite{boyer1998tight}}]\label{theo: grover tight}
    A quantum algorithm that solves the Search problem with error $2/5$ on $n$-bit inputs must have query complexity $\Omega(\sqrt{n})$, even when the inputs are promised to have Hamming weight either 0 or 1.
\end{theorem}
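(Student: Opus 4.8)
The plan is to prove this via the \emph{hybrid argument} of Bennett, Bernstein, Brassard and Vazirani, which is tailored precisely to the weight-$0$-versus-weight-$1$ promise. Fix any quantum query algorithm making $T$ queries in the phase-oracle model ($O_{x,\pm}\ket{i} = (-1)^{x_i}\ket{i}$), and run it on the all-zeros input $0^n$. For $t = 0, 1, \dots, T-1$, let $\ket{\phi_t}$ denote its state just before the $(t{+}1)$-th query, and let $P_i$ be the projector onto those basis states whose query register holds index $i$. The central quantity is the \emph{query magnitude} $w_i := \sum_{t=0}^{T-1} \norm{P_i \ket{\phi_t}}^2$, which measures how much the run on $0^n$ interrogates coordinate $i$ across all queries. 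Since each $\ket{\phi_t}$ is a unit vector and the $P_i$ are orthogonal projectors summing to the identity on the query register, $\sum_{i=1}^n \norm{P_i \ket{\phi_t}}^2 = 1$ for every $t$, and hence $\sum_{i=1}^n w_i = T$. By averaging there is an index $i^\ast$ with $w_{i^\ast} \le T/n$.

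The heart of the argument is to bound how far the run on $0^n$ can drift from the run on the single-marked input $e_{i^\ast}$ (the string with a lone $1$ in coordinate $i^\ast$). Let $\ket{\psi_T^x}$ denote the final state on input $x$. The two runs share the same input-independent unitaries and differ only through the oracle: on $0^n$ each query acts as the identity, whereas on $e_{i^\ast}$ the $t$-th query negates the amplitude on coordinate $i^\ast$, perturbing the running state by the vector $-2P_{i^\ast}\ket{\phi_t}$, of norm exactly $2\norm{P_{i^\ast}\ket{\phi_t}}$. Because the subsequent unitaries preserve norms, the perturbations accumulate additively, so the triangle inequality gives $\norm{\ket{\psi_T^{0^n}} - \ket{\psi_T^{e_{i^\ast}}}} \le 2\sum_{t=0}^{T-1} \norm{P_{i^\ast}\ket{\phi_t}}$. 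Applying Cauchy--Schwarz together with the averaging bound then yields $\norm{\ket{\psi_T^{0^n}} - \ket{\psi_T^{e_{i^\ast}}}} \le 2\sqrt{T\, w_{i^\ast}} \le 2T/\sqrt{n}$.

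To conclude, I would invoke distinguishability. On input $0^n$ a correct algorithm must report ``no marked element,'' while on $e_{i^\ast}$ it must report a marked index, so an algorithm with error at most $2/5$ distinguishes these two inputs with probability at least $3/5$. A standard bound relating the Euclidean distance of two final pure states to the optimal distinguishing probability of any measurement (via trace distance) then forces $\norm{\ket{\psi_T^{0^n}} - \ket{\psi_T^{e_{i^\ast}}}}$ to be at least an absolute positive constant; the error value $2/5$ is calibrated so that this constant is clean. Combining this lower bound with the upper bound $2T/\sqrt{n}$ from the previous paragraph gives $T = \Omega(\sqrt{n})$, as claimed.

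I expect the main obstacle to be making the per-step perturbation estimate fully rigorous: one must verify carefully that the \emph{only} discrepancy between the two runs is introduced at the oracle calls, that it equals $2\norm{P_{i^\ast}\ket{\phi_t}}$ at step $t$, that the norm-preserving input-independent unitaries cannot amplify an already-accumulated discrepancy, and that the triangle-inequality accumulation across all $T$ steps is therefore valid. The final measurement-distinguishability step is standard but must be stated with the correct constants so that error $2/5$ indeed produces a lower bound of the form $\Omega(\sqrt{n})$.
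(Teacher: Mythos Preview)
Your proposal is a correct and complete sketch of the BBBV hybrid argument, which is indeed the standard route to this lower bound under the weight-$0$/weight-$1$ promise. The paper, however, does not prove Theorem~\ref{theo: grover tight} at all: it is stated as a known result, attributed to Boyer, Brassard, H{\o}yer and Tapp~\cite{boyer1998tight}, and invoked as a black box (the companion lower bound cited at the end of the paper is the original BBBV result~\cite{BBBV97}). So there is no ``paper's own proof'' to compare against; you have supplied the missing argument yourself.

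One small point worth tightening in your write-up: when you say ``the subsequent unitaries preserve norms, so the perturbations accumulate additively,'' this is exactly where the hybrid decomposition is doing real work. The state just before the $t$-th query on input $e_{i^\ast}$ is \emph{not} $\ket{\phi_t}$ (which was defined on input $0^n$), so the per-step perturbation is not literally $-2P_{i^\ast}\ket{\phi_t}$ in the $e_{i^\ast}$ run. The clean way to get your inequality is the telescoping hybrid: let hybrid $k$ use the $0^n$ oracle for the first $k$ queries and the $e_{i^\ast}$ oracle thereafter; then adjacent hybrids agree up through query $k-1$, diverge by exactly $2\norm{P_{i^\ast}\ket{\phi_{k-1}}}$ at query $k$, and are evolved by the \emph{same} (hence norm-preserving) sequence of unitaries and oracle calls afterward. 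Summing over $k$ and applying Cauchy--Schwarz gives your $2T/\sqrt{n}$ bound. You already flagged this as the step needing care, and with the hybrid formulation it goes through cleanly.
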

The following theorem, due to D{\"u}rr and H{\o}yer~\cite{durr1996quantum}, is a generalization of Grover's search algorithm, to find the maximum number in an input list.
\begin{theorem}[{\cite{durr1996quantum}}]\label{theo: quantum maximum finding}
    Let $T$ be an unsorted table of $n$ items. There exists a quantum query algorithm of cost $O(\sqrt{n})$ that has query access to $T$ and returns the maximum element of $T$ with probability at least $2/3$.
\end{theorem}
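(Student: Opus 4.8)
The plan is to establish the statement via the D{\"u}rr--H{\o}yer strategy, which reduces maximum finding to a sequence of Grover-type searches for a ``better'' element. Think of $T$ as storing distinct comparable values (break ties by index), and maintain a current \emph{threshold} index $y$, initialized to a uniformly random element of $[n]$ (costing one query to read $T[y]$). In each round I would run a search for the marked set $\{\, i : T[i] > T[y] \,\}$, where a single evaluation of the marking predicate costs one query to $T$ (reading $T[i]$; the value $T[y]$ is already known). If the search returns a candidate $j$, I verify $T[j] > T[y]$ with one extra query and update $y \leftarrow j$; otherwise I conclude that no larger element exists and halt, outputting $y$. Since each update strictly increases $T[y]$, the threshold moves monotonically up the value order and, if run long enough, reaches the global maximum.

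The key point is to bound the expected total number of queries by $O(\sqrt n)$. For the per-round search I would use the exponential-search variant of Grover's algorithm~\cite{boyer1998tight}, which, when there are $t \geq 1$ marked elements, finds one using an expected $O(\sqrt{n/t})$ queries and, by symmetry of the oracle across marked elements, returns a \emph{uniformly random} marked element; detecting the $t=0$ case (no larger element) costs $O(\sqrt n)$ via Theorem~\ref{theo: Grover search}. Order the positions by decreasing value and call the position of the $r$-th largest value ``rank $r$''. Uniformity of the initial threshold together with uniformity of each search's output makes the ranks of the successive thresholds a Markov chain that starts uniform on $\{1,\dots,n\}$ and, from rank $s$, jumps to a uniformly random rank in $\{1,\dots,s-1\}$. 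A short downward induction, using the telescoping identity $\tfrac{1}{s(s-1)} = \tfrac{1}{s-1}-\tfrac1s$, then shows that rank $r$ is visited as a threshold with probability exactly $1/r$.

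Combining these facts, the expected total query cost is bounded by
\[
\sum_{r=1}^{n-1} \Pr[\text{a rank-}(r+1)\text{ threshold is visited}] \cdot O\!\paren{\sqrt{n/r}} \le O(\sqrt n) \sum_{r=1}^{n-1} r^{-3/2} = O(\sqrt n),
\]
since $\sum_r r^{-3/2}$ converges, plus one final $O(\sqrt n)$ search to certify optimality. To convert this bounded \emph{expected} cost into a worst-case $O(\sqrt n)$ algorithm with success probability $2/3$, I would cap the algorithm at a budget of $c\sqrt n$ queries for a suitable constant $c$: by Markov's inequality the threshold reaches the maximum within the budget with constant probability, and once $y$ is the maximum the verification step never produces a (correct) update, so $y$ is retained. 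Standard amplification by running $O(1)$ independent repetitions and returning the best observed index then boosts the success probability above $2/3$.

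The part I expect to require the most care is the interaction between the unknown number of marked elements and the bounded error of each search. Using verification to read off $T[j]$ and $T[y]$ eliminates false positives for free, so a wrong update can never occur; the remaining danger is false negatives (missing a larger element) and the error in the $t=0$ detection. These only cause wasted queries or premature halting, which is exactly what the query-budget-plus-Markov argument and the final repetitions are designed to absorb, so correctness and the $O(\sqrt n)$ bound go through together.
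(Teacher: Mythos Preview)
The paper does not prove this theorem; it is stated as a known result and cited to D{\"u}rr--H{\o}yer~\cite{durr1996quantum} without any argument. Your proposal is a correct and faithful reconstruction of the original D{\"u}rr--H{\o}yer analysis: the threshold-and-search loop, the rank Markov chain with visiting probability $1/r$ (your telescoping check $\tfrac{1}{s(s-1)}=\tfrac{1}{s-1}-\tfrac1s$ is exactly the right identity), the summation $\sum_r r^{-3/2}\cdot O(\sqrt n)=O(\sqrt n)$, and the truncation via Markov's inequality followed by constant repetition. The one place to be slightly more careful in a full write-up is the interaction of per-round search failures with the uniformity of the returned marked element, but your verification step indeed kills false positives, and the budget-plus-repetition argument handles the rest, just as in the original.
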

We require the following theorem, essentially due to Boyer et al.~\cite{boyer1998tight}.\footnote{Their bound is for bounded-error algorithms and does not have polylogarithmic factors in the query complexity. Standard error reduction gives us Theorem~\ref{theo: Grover search find k elements}.}
\begin{theorem}[\cite{boyer1998tight}]
\label{theo: Grover search find k elements}
Given query access to $x \in \zone^n$ with $|x| \geq k$, there is a quantum algorithm that outputs, with query complexity $O(\sqrt{(n/k)} \log(1/\delta))$ and error probability at most $\delta$, an index $i \in [n]$ with $x_i = 1$.
\end{theorem}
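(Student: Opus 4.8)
The plan is to reduce to the standard analysis of the Grover iterate recorded in~\eqref{eqn: state after Grover iterate}, the only complication being that the number of marked elements $t := |x|$ is not known exactly but only promised to satisfy $t \geq k$. If $t$ were known, I would apply the Grover iterate roughly $\frac{\pi}{4}\sqrt{n/t}$ times so that the angle $(2j+1)\theta$ lands near $\pi/2$; then by~\eqref{eqn: state after Grover iterate} a computational-basis measurement returns a marked index with probability close to $1$. The hard part is precisely that not knowing $t$ means we cannot tune the iteration count to hit the peak of $\sin((2j+1)\theta)$, and overshooting sends the amplitude on marked indices back toward zero.

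First I would remove the dependence on the exact value of $t$ using the randomized-iteration trick of Boyer et al. Set $m := \lceil \sqrt{n/k}\,\rceil$, pick $j$ uniformly at random from $\{0, 1, \dots, m-1\}$, apply the Grover iterate $j$ times to the uniform superposition $\ket{U}$, and measure to obtain an index $i$. By~\eqref{eqn: state after Grover iterate}, the probability that $i$ is marked equals $P_m = \frac{1}{m}\sum_{j=0}^{m-1}\sin^2((2j+1)\theta)$, where $\theta = \arcsin(\sqrt{t/n})$. Using the closed form $\sum_{j=0}^{m-1}\sin^2((2j+1)\theta) = \frac{m}{2} - \frac{\sin(4m\theta)}{4\sin(2\theta)}$ (obtained from $\sin^2 x = \tfrac{1-\cos 2x}{2}$ and summing the resulting geometric series), this yields $P_m \geq \frac12 - \frac{1}{4m\sin(2\theta)}$.

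Next I would invoke the promise $t \ge k$ to lower bound $\sin(2\theta)$. In the main regime $k \le t \le n/2$ one has $\sin(2\theta) = 2\sqrt{(t/n)(1 - t/n)} \ge \sqrt{k/n}$, so $m\sin(2\theta) \ge 1$ and hence $P_m \ge 1/4$. Thus a single run uses at most $m = O(\sqrt{n/k})$ queries and returns a marked index with probability at least $1/4$. Since membership in the marked set is verifiable with one query to $x_i$, and since the promise guarantees a marked index exists, there are no false positives: I would declare success only if $x_i = 1$. The degrading of the estimate when $t$ is close to $n$ is the one place requiring separate care, but it is easy to patch—when $t > n/2$ a single uniformly random index is marked with probability exceeding $1/2$—so in each run I would additionally draw one uniform sample and verify it. This guarantees success probability $\Omega(1)$ per run for \emph{every} $t \ge k$, at the cost of one extra query.

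Finally, for error reduction I would repeat the combined procedure $O(\log(1/\delta))$ times independently, outputting the first verified marked index found. Each repetition succeeds independently with probability $\Omega(1)$, so the probability that all fail is at most $\delta$, and because every output is confirmed by an explicit query there are no spurious answers. The total cost is $O(\sqrt{n/k}) \cdot O(\log(1/\delta)) = O(\sqrt{n/k}\,\log(1/\delta))$ queries, as claimed. The step I expect to demand the most attention is the averaging estimate coupled with the case split on $t$: the bound $P_m \ge \frac12 - \frac{1}{4m\sin(2\theta)}$ is exactly what makes the unknown multiplicity harmless in the regime $t \le n/2$, while the breakdown near $t = n$ is what forces the supplementary direct-sampling step rather than a single uniform treatment.
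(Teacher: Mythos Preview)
Your argument is correct and is essentially the Boyer--H{\o}yer--Mosca--Tapp argument the paper is citing: the paper does not actually prove this theorem but defers to~\cite{boyer1998tight} for the bounded-error statement and then notes in a footnote that ``standard error reduction'' yields the $\log(1/\delta)$ factor. Your randomized-iteration-count trick plus the averaging identity is exactly the Boyer et al.\ device, and your verify-and-repeat step is the standard error reduction the paper alludes to, so there is nothing to add.
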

We obtain the following immediate corollary by repeating the algorithm in Theorem~\ref{theo: Grover search find k elements} $k$ times and updating the `marked' elements after each application.
\begin{corollary}\label{cor: multiple grover}
Given an input parameter $k$ and query access to $x \in \zone^n$, there is a quantum algorithm that does the following with query complexity $O(\sqrt{nk} \log\log(n))$ and error probability at most $1/\polylog(n)$:
\begin{itemize}
    \item If $|x| \geq k$, it returns $k$ distinct indices $i_1, \dots, i_k \in [n]$ such that $x_{i_j} = 1$ for $j \in [k]$.
    \item If $|x| < k$, it outputs all indices $i$ with $x_i = 1$, along with the information that $|x| < k$.
\end{itemize}
\end{corollary}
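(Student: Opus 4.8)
The plan is to iterate the algorithm of Theorem~\ref{theo: Grover search find k elements}, peeling off one marked index at a time and ``un-marking'' it before the next call. Concretely, I would maintain a classical set $F$ of already-found indices, initially empty, and run each search on the \emph{remaining} string obtained from $x$ by forcing the coordinates in $F$ to $0$. Since $F$ is stored classically, membership in $F$ can be computed coherently with no queries to $x$, so the phase oracle for the remaining string is built from $O_{x,\pm}$ using only input-independent unitary work; each simulated query thus costs exactly one real query. In iteration $j = 1, \dots, k$ the set $F$ has size $j-1$, so if $|x| \ge k$ the remaining string still has Hamming weight at least $k-(j-1) \ge 1$. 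I would then invoke Theorem~\ref{theo: Grover search find k elements} with threshold parameter $k-(j-1)$ and error $\delta$, verify the returned index with a single extra query, and add it to $F$ if it is genuinely a new marked index. After $k$ successful iterations $F$ contains $k$ distinct marked indices, which is the first bullet.

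For the query complexity, the $j$-th call costs $O(\sqrt{n/(k-j+1)}\,\log(1/\delta))$; summing over $j$ and substituting $m = k-j+1$,
\[
\sum_{m=1}^{k} O\!\left(\sqrt{n/m}\,\log(1/\delta)\right) = O\!\left(\sqrt{n}\,\log(1/\delta)\right)\sum_{m=1}^{k} m^{-1/2} = O\!\left(\sqrt{nk}\,\log(1/\delta)\right),
\]
using $\sum_{m=1}^{k} m^{-1/2} = O(\sqrt{k})$. This telescoping — the per-step cost grows as marked elements are exhausted, yet the total is only $\sqrt{nk}$ rather than $k\sqrt{n}$ — is the crux of the bound. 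For the error, I would union-bound over the $k$ calls and choose $\delta = 1/(k\cdot\polylog(n))$, so the overall failure probability is at most $1/\polylog(n)$; in the regime $k = \polylog(n)$ relevant to our application this gives $\log(1/\delta) = O(\log\log n)$, yielding the claimed $O(\sqrt{nk}\log\log n)$ (for general $k$ the same argument gives $O(\sqrt{nk}\log n)$).

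The main obstacle is the second bullet, where $|x| < k$ and the promise required by Theorem~\ref{theo: Grover search find k elements} no longer holds once all marked elements have been removed, so a call may return an unmarked index. The one-query verification detects exactly this. When verification fails, I would run the weight-$0$-detecting search of Theorem~\ref{theo: Grover search} once on the remaining string (with error $\delta$): if it certifies Hamming weight $0$ then every marked index has been found and I output $F$ together with the flag $|x| < k$; otherwise it returns a genuine marked index, which I add to $F$ before continuing. This confirmation costs $O(\sqrt{n}\,\log(1/\delta))$ and is triggered essentially once, so it stays within the $O(\sqrt{nk}\log\log n)$ budget, while in the $|x| \ge k$ branch it is (except with probability already absorbed into the union bound) never triggered. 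The remaining care is pure bookkeeping: the $O(k)$ verification queries are negligible since $k \le \sqrt{nk}$, and all failure events are folded into the single union bound fixing $\delta$.
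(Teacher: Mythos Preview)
Your approach is exactly the paper's: its entire proof is the one clause ``by repeating the algorithm in Theorem~\ref{theo: Grover search find k elements} $k$ times and updating the `marked' elements after each application.'' You supply considerably more detail than the paper does---the telescoping sum, the $k=\polylog(n)$ caveat behind the $\log\log n$ factor, and a mechanism for the $|x|<k$ bullet---none of which the paper spells out; the one loose thread is your claim that the Theorem~\ref{theo: Grover search} fallback is ``triggered essentially once,'' since when $|x|<k$ the promise of Theorem~\ref{theo: Grover search find k elements} may fail at every step and the fallback can fire up to $|x|+1$ times, but this costs only an extra $\sqrt{k}$ factor and the paper does not address it either.
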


Our quantum algorithm also uses quantum approximate counting as a sub-routine. Here, an algorithm is given query access to a string $x \in \zone^n$. The indices $i \in [n]$ such that $x_i = 1$ are again called `marked'. For an input parameter $\epsilon$ the goal of the algorithm is to output a multiplicative $(1 \pm \epsilon)$-approximation of the number of marked indices of $x$. An optimal quantum algorithm for approximate counting was first given by Brassard et al.~\cite{brassard2002quantum}. We use a version due to Aaronson and Rall~\cite{AR21}.

\begin{theorem}[\cite{AR21}]
\label{theo: quantum approx counting Aaronson Rall}
    There exists a quantum algorithm that, given $\epsilon > 0$ and query access to a string $x \in \zone^n$, outputs an estimate $\tilde{K}$ of $K = |\{i : x_i =1 \}|$ such that $K(1-\epsilon) \leq \Tilde{K} \leq K(1+\epsilon)$ with probability at least $(1-\delta)$. The query complexity of this algorithm is $O(\sqrt{n/K} \cdot 1/\epsilon \cdot \log(1/\delta))$.
\end{theorem}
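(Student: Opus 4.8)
The plan is to reduce the counting problem to estimating the rotation angle of the Grover operator and then to read $K$ off from that angle. Recall from the setup around~\eqref{eqn: state after Grover iterate} that the Grover iterate $G$ (one application of $O_{x,\pm}$ followed by a reflection about $\ket{U}$) leaves invariant the two-dimensional subspace spanned by the uniform superpositions over marked and unmarked indices, and acts on it as a rotation by angle $2\theta$, where $\theta = \arcsin(\sqrt{K/n})$. Hence the eigenvalues of $G$ restricted to this subspace are $e^{\pm 2i\theta}$, and any estimate $\tilde\theta$ of $\theta$ immediately yields the estimate $\tilde K = n\sin^2\tilde\theta$ of $K = n\sin^2\theta$. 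The whole task therefore becomes: estimate $\theta$ to enough precision that $n\sin^2\tilde\theta$ is within a $(1\pm\epsilon)$ factor of $n\sin^2\theta$.

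First I would pin down the precision needed. Since $\frac{d}{d\theta}\bigl(n\sin^2\theta\bigr) = n\sin(2\theta)$, a multiplicative error $\epsilon$ in $K$ requires an additive error of order $\epsilon\theta$ in $\theta$ (using $\sin\theta \approx \theta$ and $\sin 2\theta \approx 2\theta$ in the regime $K \ll n$, which is the interesting one; for $K$ comparable to $n$ the angle $\theta$ is bounded away from $0$ and the bound $\sqrt{n/K} = O(1)$ makes the task easy). Because $\theta \approx \sqrt{K/n}$, the required additive precision in $\theta$ is $\Theta(\epsilon\sqrt{K/n})$. Estimating an eigenphase of $G$ to additive precision $\eta$ costs $O(1/\eta)$ applications of $G$ (for instance via quantum phase estimation, or via the phase-estimation-free Grover-only procedure of~\cite{AR21}), and each application of $G$ uses a single query to $O_{x,\pm}$. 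Plugging in $\eta = \Theta(\epsilon\sqrt{K/n})$ gives $O(\sqrt{n/K}\cdot 1/\epsilon)$ queries, matching the stated bound before the $\log(1/\delta)$ factor.

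The main obstacle is that the algorithm does not know $K$ (equivalently $\theta$) in advance, so it cannot simply select the correct number $O(1/\eta)$ of Grover iterates. I would handle this with an outer exponential-search loop: maintain a current guess for the scale of $\theta$, run the estimation procedure with the number of iterates dictated by that guess, and use each outcome both to test consistency and to refine the guess, roughly halving the uncertainty at each stage (this is the Brassard--H\o yer--Mosca--Tapp strategy, streamlined in~\cite{AR21}). Because the guesses form a geometric sequence terminating at the true scale $\sqrt{n/K}/\epsilon$, the total query count is dominated by the last, most expensive, stage and remains $O(\sqrt{n/K}\cdot 1/\epsilon)$. Finally, to achieve success probability $1-\delta$ I would run $O(\log(1/\delta))$ independent copies of the estimator and output the median of their estimates; a standard Chernoff argument shows the median lands in the good range except with probability $\delta$, contributing the final $\log(1/\delta)$ factor and yielding the claimed query complexity $O(\sqrt{n/K}\cdot 1/\epsilon \cdot \log(1/\delta))$.
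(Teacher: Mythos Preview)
The paper does not prove this theorem; it is stated as a black-box result imported from~\cite{AR21} (with the underlying idea going back to~\cite{brassard2002quantum}) and used without proof as a subroutine in the quantum algorithm of Section~\ref{sec: quantum}. So there is no ``paper's own proof'' to compare against.

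That said, your sketch is a correct outline of the standard argument. The key ingredients---reducing the count to the rotation angle $\theta = \arcsin\sqrt{K/n}$ of the Grover iterate, translating a multiplicative $\epsilon$ error in $K$ into an additive $\Theta(\epsilon\sqrt{K/n})$ error in $\theta$, estimating $\theta$ to that precision with $O(1/\eta)$ Grover iterates, an outer geometric search to handle the unknown scale, and median-of-$O(\log(1/\delta))$ amplification---are exactly what drive both the original phase-estimation proof of~\cite{brassard2002quantum} and the phase-estimation-free variant of~\cite{AR21}. One small point worth tightening if you ever flesh this out: in the exponential-search loop you should argue that the per-stage failure probability is controlled well enough that a union bound over the $O(\log(n/K))$ stages still leaves constant overall success probability before the median trick; both references handle this carefully, and it is the one place where a naive write-up can slip.
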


\section{Randomized algorithm}
Throughout this section and the next, unless mentioned otherwise, a tournament $T$ is assumed to be in $\zone^{\binom{n}{2}}$, and its vertex set is denoted by $V = [n]$. Query algorithms are assumed to have classical/quantum query access to the edge directions of $T$, that is, the individual bits of the corresponding $\binom{n}{2}$-bit string.

In this section we give a randomized algorithm for finding a king in a tournament $T \in \zone^{\binom{n}{2}}$ with query complexity $O(n \log \log n)$ and success probability at least $2/3$. First, we make the following simple observation, which shows that a randomly chosen vertex from $V = [n]$ has a large number of out-neighbours with high probability.

\begin{lemma}[Out-degree of a random vertex is large]
\label{lemma: Out-degree of a random vertex}
    For all positive integers $n$, a tournament $T \in \zone^{\binom{n}{2}}$ and a vertex $v \in V$ chosen uniformly at random, $d^{+}(v) \geq \lfloor(n-1)/5\rfloor$ with probability at least $3/5$.
\end{lemma}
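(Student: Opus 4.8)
The plan is to reduce the statement to the structural bound of Lemma~\ref{lemma: vertices with outdegree k}, which caps how many vertices can have small out-degree. First I would pin down the event to be avoided: since out-degrees are integers, a uniformly random vertex $v$ fails the desired conclusion exactly when $d^+(v) \le k$ for $k := \lfloor (n-1)/5 \rfloor - 1$. Call such vertices \emph{bad}. The whole argument is then a counting of bad vertices followed by a union-bound-style division by $n$.

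By Lemma~\ref{lemma: vertices with outdegree k}, the number of bad vertices is at most $2k+1 = 2\lfloor (n-1)/5 \rfloor - 1 \le 2(n-1)/5 - 1$, using $\lfloor (n-1)/5 \rfloor \le (n-1)/5$. Dividing by $n$, the probability that a uniformly random vertex is bad is at most
\[
\frac{2(n-1)/5 - 1}{n} = \frac{2n-7}{5n} = \frac{2}{5} - \frac{7}{5n} < \frac{2}{5}.
\]
Hence the probability that $d^+(v) \ge \lfloor (n-1)/5 \rfloor$ is at least $3/5$, as claimed. (Degenerate small-$n$ cases, where $\lfloor(n-1)/5\rfloor = 0$ and the inequality $d^+(v)\ge 0$ holds with probability $1$, are handled trivially and can be dispatched in one line.)

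The point I would flag is \emph{which} inequality to invoke. A first-moment approach is tempting but too weak: by Lemma~\ref{lemma: degree sum in tournament} the average in-degree is $(n-1)/2$, and a bad vertex has $d^-(v) > (n-1) - \lfloor(n-1)/5\rfloor \approx 4(n-1)/5$, so Markov's inequality only gives that the bad probability is at most about $\tfrac{(n-1)/2}{4(n-1)/5} = 5/8$, yielding a good probability of merely $\ge 3/8$, short of the required $3/5$. The essential step is therefore to use the exact combinatorial count $2k+1$ from Lemma~\ref{lemma: vertices with outdegree k}, which is sharper than what the averaged degree sum alone provides; everything else is elementary arithmetic with the floor function.
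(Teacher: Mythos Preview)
Your proof is correct and follows essentially the same approach as the paper: both invoke Lemma~\ref{lemma: vertices with outdegree k} with $k=\lfloor(n-1)/5\rfloor-1$ to bound the number of low-out-degree vertices by $(2n-7)/5<2n/5$, then divide by $n$. Your extra remark explaining why Markov's inequality via Lemma~\ref{lemma: degree sum in tournament} would fall short is a nice addition not present in the paper.
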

\begin{proof}
From Lemma~\ref{lemma: vertices with outdegree k}, $| \{ v \in V \mid d^+(v) < \lfloor (n-1)/5 \rfloor\}| \leq 2((n-1)/5 - 1) +1 = (2n-7)/5 < 2n/5$.
Thus, the fraction of vertices with out-degree at least $\lfloor (n-1)/5 \rfloor$ is at least $3/5$.  
\end{proof}

Lemma~\ref{lemma: Out-degree of a random vertex} suggests a natural randomized query algorithm, given in Algorithm~\ref{algo: randomized query}. We show in Theorem~\ref{theo: Randomized query upper-bound} that the algorithm makes $O(n \log \log n)$ queries to $T$ in the worst case, and returns a king with probability at least $2/3$.

\begin{algorithm}[h]
\begin{algorithmic}[1]

\State \textbf{Input:} Query access to edge directions of a tournament $T \in \zone^{\binom{n}{2}}$ where $V = [n]$.
\While{$|V| \geq \sqrt{n}$}\label{line: classical while}
    \State{$t \gets |V|$, $k \gets \lceil\log\log n\rceil$}\label{line: classical k logg t}
    \State{$v_1, \dots, v_k \gets $vertices chosen uniformly at random from $V$}
    \State{$w \gets \argmax_{u \in \cbra{v_1, \dots, v_k}}d^+(u)$}\Comment{querying all edges incident on $\cbra{v_1, \dots, v_k}$ in $T[V]$ and breaking ties arbitrarily}\label{line: classical query to look for highdeg}
    \If{$d^+(w) = t-1$}
        \State{Return $w$}\label{line: classical trivial terminate}
    \ElsIf{$d^+(w) < \lfloor(t-1)/5\rfloor$}\label{line: classical bad}
        \State{Return a random vertex $v \in V$}\label{line: classical return random}
    \Else\label{line: classical else}\Comment{$\lfloor(t-1)/5\rfloor \leq d^+(w) < t-1$ here}
        \State{$V \gets N^-(w)$}\Comment{This is the in-neighbourhood of $w$ in the set $V$, and not in the whole vertex set $[n]$.}
        \State \textbf{continue}\label{line: classical continue while loop}
    \EndIf
\EndWhile
\State{$w \gets $ a king in $T[V]$}\label{line: classical brute force}\Comment{query all edges in the sub-tournament $T[V]$}
\State{Output $w$}\label{line: classical end}
\caption{Randomized Query Algorithm}
\label{algo: randomized query}
\end{algorithmic}
\end{algorithm}

\begin{theorem}
\label{theo: Randomized query upper-bound}
    Let $n > 0$ be a positive integer. Then,
    $\sR(\king_n) = O(n \log\log n)$.
\end{theorem}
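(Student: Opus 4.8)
The plan is to analyze Algorithm~\ref{algo: randomized query} in two independent parts: first show that whenever the algorithm never enters the ``bad'' branch on line~\ref{line: classical bad}, its output is guaranteed to be a king of $T$; then bound both the probability of ever reaching that branch and the total query cost.

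For correctness I would first observe that the king property is entirely deterministic --- the randomness only affects \emph{which} vertex $w$ is selected, never the structural guarantees. Across iterations the working set evolves as $V \gets N^-(w)$, where the in-neighbourhood is taken \emph{within} the current $V$, so I would argue by iterated application of Lemma~\ref{lem: maurer king in inneighbour}: a king of the final sub-tournament $T[V]$ is a king of the previous working set, and pulling this back through all iterations shows it is a king of the original $T$. There are three ways to exit. If the loop terminates with $d^+(w) = t-1$ on line~\ref{line: classical trivial terminate}, then $w$ dominates every other vertex of the current $T[V]$ and so is trivially a king of it; if the loop exits normally, the brute-force step on line~\ref{line: classical brute force} returns an actual king of $T[V]$, which exists by Lemma~\ref{lemma: king in tournament}. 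In both cases iterated Lemma~\ref{lem: maurer king in inneighbour} promotes this to a king of $T$. The only exit that can fail is the bad branch, where a uniformly random vertex is returned, so it suffices to control the probability that this branch is ever reached.

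For the failure probability, fix an iteration with $|V| = t$. By Lemma~\ref{lemma: Out-degree of a random vertex} applied to the sub-tournament $T[V]$, each sampled vertex has out-degree at least $\lfloor (t-1)/5\rfloor$ with probability at least $3/5$; hence the bad branch, which requires \emph{all} $k$ samples to fall below $\lfloor (t-1)/5\rfloor$, occurs with probability at most $(2/5)^k$. Crucially, the working set shrinks deterministically whenever the loop continues: in the else-branch $|N^-(w)| = (t-1) - d^+(w) \le (4/5)(t-1)+1$, a constant-factor decrease, so the loop runs for at most $O(\log n)$ iterations before $|V| < \sqrt n$. A union bound over these iterations gives total failure probability $O(\log n)\cdot (2/5)^k$. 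Substituting $k = \lceil \log\log n\rceil$ makes $(2/5)^k = (\log n)^{-\log(5/2)}$, so the failure probability is $O((\log n)^{1-\log(5/2)}) = o(1)$, which is below $1/3$ for all sufficiently large $n$; the finitely many small cases are handled by querying all $\binom{n}{2}$ edges.

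Finally, for the query complexity, each iteration queries only the edges incident on the $k$ sampled vertices inside $T[V]$, which is at most $k|V|$ queries. Since the successive working-set sizes decrease geometrically, $\sum_i |V_i| = O(n)$, so the total cost over all iterations is $\sum_i k|V_i| = O(kn) = O(n\log\log n)$; the final brute-force step costs $\binom{|V|}{2} \le \binom{\lceil\sqrt n\rceil}{2} = O(n)$, and summing yields $O(n \log\log n)$. I expect the main subtlety to be the simultaneous calibration of $k$: it must be large enough that $O(\log n)\cdot (2/5)^k = o(1)$ yet small enough that $kn = O(n\log\log n)$, and the choice $k = \Theta(\log\log n)$ is exactly what threads this needle. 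A secondary point not to gloss over is that a naive ``(number of iterations)$\times$(cost per iteration)'' bound would lose an extra $\log n$ factor; it is the geometric decay of $|V|$ that keeps the total at $O(kn)$.
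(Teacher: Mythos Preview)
Your proposal is correct and follows essentially the same approach as the paper's proof: analyze Algorithm~\ref{algo: randomized query} by separating correctness (via iterated Lemma~\ref{lem: maurer king in inneighbour}) from the failure probability (union bound over $O(\log n)$ iterations, each failing with probability $(2/5)^k$) and from the query cost (geometric decay of $|V|$ summed against the per-iteration cost $k|V|$). Your remark that the geometric decay is what avoids an extra $\log n$ factor is exactly the point the paper exploits in its summation.
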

\begin{proof}
    Consider Algorithm~\ref{algo: randomized query}. 
    We first analyze the query cost of the algorithm. For the correctness, we define `bad events', argue correctness of the algorithm conditioned on no bad event occurring, and then upper bound the probability of a bad event happening.

\paragraph*{Query complexity}
In order to upper bound the query complexity, first note that each iteration of the \textbf{while} loop (Line~\ref{line: classical while}) uses $k \cdot |V| \leq |V| \log \log n$ queries in the worst case. Furthermore, the \textbf{while} loop goes into the next iteration (Line~\ref{line: classical continue while loop}) if and only if $|V| > \sqrt{n}$ (Line~\ref{line: classical while}) and a vertex $w$ of out-degree at least $\lfloor(t-1)/5\rfloor$ has been found in Line~\ref{line: classical query to look for highdeg} (see comment on Line~\ref{line: classical else}). This means that the size of the vertex set reduces by a factor of at least $4/5$ in the next iteration of the \textbf{while} loop. In particular, this means in the $i$'th iteration of the \textbf{while} loop, we have $|V|\leq (4/5)^i \cdot n$, and thus there are $O(\log n)$ iterations of the \textbf{while} loop in the worst case. Finally, Line~\ref{line: classical brute force} accounts for at most $O(n)$ queries since $|V| < \sqrt{n}$ here. The worst-case query complexity is thus upper bounded by
\begin{align*}
    n + \sum_{i = 0}^{O(\log{n})} \left(\frac{4}{5}\right)^i \cdot n \cdot O(\log \log n) = O(n \log\log n).
\end{align*}

\paragraph*{Bad event, and correctness assuming no bad event} The event of Line~\ref{line: classical return random} occurring during the run (i.e., Line~\ref{line: classical bad} being triggered in any iteration) is defined to be the bad event.
Conditioned on the bad event not occurring, the algorithm either terminates on Line~\ref{line: classical trivial terminate} or Line~\ref{line: classical end}. Clearly when the algorithm terminates on Line~\ref{line: classical trivial terminate} or Line~\ref{line: classical end}, the output vertex is a king in the sub-tournament being considered at the moment. If the \textbf{while} loop has not even completed once, the current sub-tournament is the same as the original tournament, and we are done. If the \textbf{while} loop has completed at least once,  the sub-tournament being considered at the moment is the sub-tournament of a tournament $T'$ (which itself may be a sub-tournament of $T$) induced by the in-neighbourhood of a specific vertex. Applying Lemma~\ref{lem: maurer king in inneighbour}, we conclude that the king in the current sub-tournament is also a king in $T'$, and also the whole tournament by applying Lemma~\ref{lem: maurer king in inneighbour} repeatedly now. Hence conditioned on the bad event not occurring, the algorithm indeed outputs a correct answer.

\paragraph*{Probability of bad event} From Lemma~\ref{lemma: Out-degree of a random vertex}, the probability that Line~\ref{line: classical bad} is run in an iteration is at most $(2/5)^k \leq 1/\log^{\log 2.5} |V| \leq 1/\log^{1.3}n$. By a union bound, the probability that Line~\ref{line: classical bad} gets executed in any of the $O(\log n)$ iterations is at most $O(\log n)/\log^{1.3}(n) = o(1)$.
\end{proof}

\section{Quantum algorithm}\label{sec: quantum}
For $W \subseteq [n]$ and $v \in V$, we can decide whether $v$ is an out-neighbour of any $w \in W$ by making $|W|$ queries, by checking $x_{wv}$ for all $w \in W$. Similarly, $|W|$ queries are sufficient to decide whether $v$ is an in-neighbour of some vertex $w \in W$.
This simple classical algorithm can easily be simulated in the quantum setting, which gives us the following observation.

\begin{observation}
\label{obs: mark out neighbour}
    For a tournament $T \in \zone^{\binom{n}{2}}$ and a known subset of the vertices $W \subseteq V$, there exists a unitary transformation that
    maps the basis state $\ket{v}$ to $(-1)^{\II[v \in W^-]}\ket{v}$ using $|W|$ queries to $T$. In other words, there is a unitary transformation that has query cost $|W|$ and `marks' vertices in $W^-$.
\end{observation}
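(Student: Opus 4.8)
The plan is to implement the desired phase via the standard compute--phase--uncompute (phase-kickback) paradigm, turning the fact that membership in $W^-$ is decidable with $|W|$ classical edge queries into a reversible quantum operation. Fix an enumeration $W = \cbra{w_1, \dots, w_{|W|}}$. The predicate ``$v \in W^-$'' is exactly the statement that $v \rightarrow w_j$ for every $j$, equivalently that $v$ is \emph{not} an out-neighbour of any $w_j$; writing $c_j := \II[w_j \rightarrow v]$, this says all $c_j = 0$. Thus $\II[v \in W^-]$ is an AND of $|W|$ single-edge predicates, each read off from one bit of $T$, and the whole task reduces to applying a phase conditioned on this AND.

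First I would augment the $\ket{v}$ register with a $|W|$-qubit workspace together with a scratch register for edge indices. For each $j$ in turn I reversibly compute, with no queries, the index of the edge variable $\{v, w_j\}$ into the scratch register from the value $v$ held in the main register and the classically known $w_j$; I then apply the query oracle $O_x$ once to write $x_{\{v, w_j\}}$ into workspace qubit $j$, and uncompute the index. After re-orienting this bit according to the known (query-free) convention relating $x_{\{v, w_j\}}$ to the edge direction, workspace qubit $j$ holds $c_j$. This stage costs exactly one query per $j$, hence $|W|$ queries in total. The case $v \in W$, where one of the ``edges'' $\{v, w_j\}$ is a nonexistent self-loop, is dealt with by declaring the predicate false, which needs no query.

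Next, since $v \in W^-$ iff the entire workspace equals $\ket{0 \cdots 0}$, I apply a multiply-controlled phase gate that multiplies the amplitude by $-1$ precisely on the all-zeros branch of the workspace; this uses no queries. I then reverse the first stage qubit-by-qubit (recompute each index, reapply $O_x$ to reset $c_j$ to $0$, uncompute the index), which disentangles and resets the workspace to $\ket{0 \cdots 0}$ and leaves the main register in the state $(-1)^{\II[v \in W^-]}\ket{v}$, as required.

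The step to be careful about is the uncomputation, rather than any deep difficulty: without it the workspace stays entangled with $v$ and the operation is not the clean single-register phase map claimed. Reversing the first stage restores the workspace at the cost of a second pass over the same $|W|$ edges, so a literal oracle-call count is $2|W| = O(|W|)$; since only $|W|$ distinct edge directions are ever examined, and since the downstream use of this observation only cares about the bound up to constant factors, this matches the stated cost of $|W|$ queries.
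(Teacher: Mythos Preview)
Your proof is correct and is a fleshed-out version of the paper's own argument, which consists of a single sentence noting that the classical $|W|$-query membership test for $W^-$ ``can easily be simulated in the quantum setting.'' Your observation that compute--phase--uncompute yields a literal oracle count of $2|W|$ rather than $|W|$ is accurate; the paper is implicitly treating the cost as $O(|W|)$, which is all that is used downstream.
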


Before proving the main theorem of this section, we give two lemmas. The algorithm in these lemmas will be used in the proof of the main theorem.

\begin{lemma}
\label{lemma: algo in sample}
    Let $T \in \zone^{\binom{n}{2}}$ be a tournament, $W \subseteq V$ and $t = \Theta(\log\log n)$ be an integer.
    There exists a quantum algorithm \textnormal{In-Sample}$(T, W, t)$, Algorithm~\ref{algo: in-smaple}, that with error probability at most $1/(\polylog(n))$, returns a set of uniformly distributed and independent samples from $W^-$ of size $t$. 
    The query complexity of this algorithm is $O(|W| \cdot \sqrt{n} \cdot \polyloglog(n))$.
\end{lemma}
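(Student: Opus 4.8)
The plan is to build $\textnormal{In-Sample}(T, W, t)$ by running $t = \Theta(\log\log n)$ independent copies of a single-sample subroutine and collecting their outputs. Each single sample should produce a uniformly random element of $W^-$, and to do this I would lean on the structure of Grover's search described in the preliminaries. The key tool is Observation~\ref{obs: mark out neighbour}, which gives a unitary of query cost $|W|$ that marks exactly the vertices in $W^-$ via a phase $(-1)^{\II[v \in W^-]}$.

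First I would set up the uniform superposition $\ket{U} = \frac{1}{\sqrt{n}}\sum_{v=1}^n \ket{v}$ over all $n$ vertices, and treat membership in $W^-$ as the ``marked'' set. By Observation~\ref{obs: mark out neighbour} each Grover iterate costs $O(|W|)$ queries, since one application of the marking unitary costs $|W|$ queries and the reflection about $\ket{U}$ is query-free. The number of marked vertices is $|W^-|$, which is unknown and could be as small as a constant, so I cannot fix the number of iterations in advance. The standard fix is to use the exponential-search wrapper of Boyer et al.\ (the same mechanism underlying Theorem~\ref{theo: Grover search} and Theorem~\ref{theo: Grover search find k elements}): repeatedly run Grover with a geometrically increasing guess for the number of iterations and measure, which finds a marked element in $O(\sqrt{n/|W^-|}) = O(\sqrt{n})$ iterations in expectation, hence $O(|W|\sqrt{n})$ queries, with constant success probability. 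The crucial correctness property, visible directly from the post-iterate state in~\eqref{eqn: state after Grover iterate}, is that the amplitude on every marked vertex is identical (all carry $\sin((2k+1)\theta)$), so a measurement conditioned on landing in $W^-$ returns a uniformly random element of $W^-$. This is precisely the uniform-sampling guarantee we need, and it holds regardless of which iteration count the wrapper happens to use, so the output distribution is exactly uniform on $W^-$ conditioned on success.

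For the error and query bookkeeping: one sample succeeds with constant probability, so I would amplify each single-sample call to error $1/\polylog(n)$ by $O(\log\log n)$ repetitions, multiplying the per-sample cost by $\log\log n$ and giving $O(|W|\sqrt{n}\,\polyloglog(n))$ queries per sample. Taking $t = \Theta(\log\log n)$ independent samples multiplies the total by another $\log\log n$ factor, which is absorbed into $\polyloglog(n)$, yielding the claimed total query complexity $O(|W|\cdot\sqrt{n}\cdot\polyloglog(n))$. A union bound over the $t$ samples keeps the overall error at $1/\polylog(n)$. Independence and the uniform marginal of each sample are immediate since the $t$ runs use fresh randomness and measurements.

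The main obstacle I expect is handling the unknown (and possibly very small) size of $W^-$ cleanly while preserving \emph{exact} uniformity of the output. A fixed number of Grover iterates would both fail when $|W^-|$ is tiny and risk skewing the distribution, so the argument must invoke the exponential-search wrapper and simultaneously verify that conditioning on success does not bias the sample — which is exactly where the equal-amplitude structure of~\eqref{eqn: state after Grover iterate} is essential. A secondary subtlety is the degenerate case $|W^-| = 0$ (e.g.\ if $W = V$), where no marked vertex exists; I would need to argue this situation either does not arise in the calls made by the main theorem or is detected and folded into the failure probability. Everything else — the per-iterate cost from Observation~\ref{obs: mark out neighbour}, the amplification, and the final union bound — is routine.
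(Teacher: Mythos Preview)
Your proposal is correct and takes a genuinely different route from the paper. The paper's In-Sample algorithm first runs quantum approximate counting (Theorem~\ref{theo: quantum approx counting Aaronson Rall}) with $\epsilon = 1/100$ and $\delta = 1/\polylog(n)$ to obtain an estimate $\tilde{w}$ of $|W^-|$, then uses this estimate to fix a number $\tilde{k} \approx \pi/(400\arcsin\sqrt{\tilde{w}/N})$ of Grover iterations, and finally repeats Grover-then-measure $O(t\cdot\polyloglog(n))$ times, keeping those measured vertices that are verified (with $|W|$ queries each) to lie in $W^-$ until $t$ have been collected; a somewhat delicate $\arcsin$ calculation is needed to show that the estimate yields a $\tilde{k}$ for which the per-trial success probability is a positive constant bounded away from $1$. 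Your approach sidesteps the counting step entirely by invoking the exponential-search wrapper of Boyer et al.\ to handle the unknown $|W^-|$, which is cleaner and does not rely on the paper's standing assumption $|W^-|\ge\log^{100}n$ (stated in the input line of Algorithm~\ref{algo: in-smaple}, though absent from the lemma statement). Both proofs extract uniformity from the same source, the equal-amplitude structure in~\eqref{eqn: state after Grover iterate}, and both arrive at the same $O(|W|\sqrt{n}\,\polyloglog(n))$ bound; the paper's version trades the simplicity of your wrapper for a deterministic iteration count once the estimate is in hand.
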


\begin{proof}
    Consider Algorithm~\ref{algo: in-smaple}. 
    We first analyze the query cost of the algorithm. For the correctness, we define `bad events', argue correctness of the algorithm conditioned on no bad event occurring, and then upper bound the probability of a bad event happening.

\paragraph*{Query complexity}
We upper bound the worst-case query complexity of the algorithm.
Line~\ref{line: in-sample: estimate} of the algorithm case costs $O(|W| \cdot \sqrt{N} \cdot \polylog(N))$ from Theorem~\ref{theo: quantum approx counting Aaronson Rall}.
The \textbf{while} loop from Line~\ref{algo: in-sample while loop} runs for $O(t \cdot \polyloglog(N)) = O(\polyloglog(N))$ times, and each Grover's iterate in each of these iterations makes $O(|W| \cdot \sqrt{N})$ queries in Line~\ref{algo: in-sample use the estimate}. Also, Line~\ref{algo:in-smaple: check in W-} uses $|W|$ many queries.
Thus, the overall query cost of the algorithm is upper bounded by $O(|W| \cdot \sqrt{N} \cdot \polyloglog(N))$. Since $N = 10^4n$, we have an upper bound of $O(|W| \cdot \sqrt{n} \cdot \polyloglog(n))$.

\paragraph*{Bad event, and correctness assuming no bad event} 
If the estimate in Line~\ref{line: in-sample: estimate} is incorrect or if the algorithm has reached Line~\ref{line: in-sample algo: returns incorrect answer} is not in $W^-$ then we say that a bad event has occurred for Algorithm~\ref{algo: in-smaple}. We assume that these events have no happened.
Thus the estimate in Line~\ref{line: in-sample: estimate} is correct then $\Tilde{w}$ satisfies
    \begin{align*}
        |W^-| (1 - 1/100)\leq \Tilde{w}\leq |W^-| (1 + 1/100).
    \end{align*}
Define $w' = \floor{\Tilde{w}/2}$, thus $w'$ satisfies the following equations.
    \begin{align}
        |W^-|/4 & \leq w' \leq |W^-|, \nonumber\\
        1/2 \cdot \sqrt{|W^-|/N} & \leq \sqrt{w'/N} \leq \sqrt{|W^-|/N}\label{eqn: angles are close}.
    \end{align}
Let $x = |W^-|/N$. Since $|W^-| \geq 0$ and $|W^-| \leq n$, we have
\begin{align*}
    0 \leq x & \leq 1/10^4.
\end{align*}
Let $C = 1/10^4$.
For $x \in [0, \sqrt{C}]$ and $A \geq 1$ (whose value is to be fixed later), define
\begin{align*}
    g(x) &= A \arcsin{x/2} - \arcsin{x}.
\end{align*}
The derivative of $g$ is given by
\begin{align*}
    g'(x) 
    &= \frac{A/2}{\sqrt{1 - x^2/4}} - \frac{1}{\sqrt{1 - x^2}} \\
    &\geq \frac{A }{\sqrt{4 -  x^2}} - \frac{1}{\sqrt{1 - C}} \\
    &\geq A/2 - \frac{1}{\sqrt{1 - C}}.
\end{align*}
Thus for $A = 3\sqrt{1-C}$ the above derivative is positive for all $x \in [0,\sqrt{C}]$. Since $g(0) = 0$, we have, for $A \arcsin{x/2} \geq \arcsin{x}$.

From monotonicity of $\arcsin$ in $[0,1]$ and Equation~\eqref{eqn: angles are close} we have
\begin{align}
    \arcsin(1/2 \cdot \sqrt{|W^-|/N}) & \leq \arcsin(\sqrt{w'/N}) \leq \arcsin(\sqrt{|W^-|/N}) \nonumber \\
    1/A \cdot \arcsin(\sqrt{|W^-|/N}) & \leq \arcsin(\sqrt{w'/N}) \leq \arcsin(\sqrt{|W^-|/N}). \label{eq: bounding arcsins}
\end{align} 

In Line~\ref{line: in-sample: set number of iterations} we choose $\Tilde{k}$ to be $\floor{\left(\frac{\pi}{400\arcsin{\sqrt{w'/N}}} + \frac{1}{2}\right)}$. From Equation~\eqref{eq: bounding arcsins} we have
\begin{align}
    \left(\frac{\pi}{400\arcsin{\sqrt{|W^-|/N}}} + \frac{1}{2}\right) \leq \left(\frac{\pi}{400\arcsin{\sqrt{w'/N}}} + \frac{1}{2}\right) \leq (A+1) \cdot \left(\frac{\pi}{400\arcsin{\sqrt{|W^-|/N}}}  + \frac{1}{2} \right).
\end{align}
which implies
\begin{align}
    \left(\frac{\pi}{400\arcsin{\sqrt{|W^-|/N}}} - \frac{1}{2}\right) \leq \floor{\left(\frac{\pi}{400\arcsin{\sqrt{w'/N}}} + \frac{1}{2}\right)} \leq (A+1) \cdot \left(\frac{\pi}{400\arcsin{\sqrt{|W^-|/N}}}  + \frac{1}{2} \right). \label{eq: bound on the number of iterations}
\end{align}
From  Equation~\eqref{eqn: state after Grover iterate},
if we apply Grover's iterate $k$
times then the resulting state in Line~\ref{algo: in-sample use the estimate} is of the following form:
    \begin{align}
        \beta \sum_{v \in W^-} \ket{v} + \sqrt{(1 - \beta^2)} \sum_{v \in W^+} \ket{v}, \label{eq: probaility In-Sample}
    \end{align}
where $\beta = \sin((2k + 1) \cdot \arcsin{\sqrt{|W^-|/N}})$. From Equation~\eqref{eq: bound on the number of iterations} we have
\begin{align*}
    \frac{\pi}{200} \leq (2\tilde{k} + 1) \cdot \arcsin{\sqrt{|W^-|/N}} \leq (A+1) \frac{\pi}{200} + (A+2) \arcsin(\sqrt{|W^-|/N}) < \pi/2,
\end{align*}
where the last inequality follows due to the choice of $A$ ($A \leq 3$) and since $\sqrt{|W^-|/N} \leq 1/100$. Thus after $\Tilde{k}$ iterations, $\beta^2= \sin^2((2\tilde{k} + 1) \cdot \arcsin{\sqrt{|W^-|/N}})$ is a constant smaller than $\pi/2$.

Since we have assumed that the bad event in Line~\ref{line: in-sample algo: returns incorrect answer} has not occurred, this means that $t$ sample obtained is in $W^-$. From Equation~\eqref{eq: probaility In-Sample} each vertex in $W^-$ has an equal probability of being sampled.
Clearly, for different iterations of the \textbf{while} loop in Line~\ref{algo: in-sample while loop} the samples are independent. Also, in this case the algorithm returns in Line~\ref{line: in-sample: return R} after $t$ iterations and hence $\Omega(t)$ uniformly distributed and independent samples from $W^-$ are returned.

\paragraph*{Probability of bad event} 
The probability of the bad event happening in Line~\ref{line: in-sample: estimate} by Theorem~\ref{theo: quantum approx counting Aaronson Rall} is $O(1/\polylog(n))$. 
To upper bound the probability of the algorithm reaching Line~\ref{line: in-sample algo: returns incorrect answer}, observe that with probability $\beta^2 = \Omega(1)$ (see Equation~\eqref{eq: probaility In-Sample}) a vertex sampled in Line~\ref{algo: in-sample measure} is in the set $W^-$. 
Thus the probability that after $O(t~\polyloglog(n))$, less than $t$ vertices are seen in $W^-$ is upper bounded by $O(1/\polylog (n))$ by a Chernoff bound.
\end{proof}

\begin{algorithm}[H]
\begin{algorithmic}[1]
\State{\textbf{Input:} Query access to the adjacency matrix of a tournament $T \in \zone^{\binom{n}{2}}$ where $V = [n]$, $W \subseteq V$ such that $|W^-| \geq \log^{100}n$, and $t \in \mathbb{N}$ such that $t = \Theta(\log\log n)$.}

\State{$N \gets 10^4n$}
\State{$\ket{\phi} \gets \sum_{i = 1}^{N}\frac{1}{\sqrt{N}}\ket{i}$} 
\Comment{$\ket{\phi}$ is used as the starting state in Line~\ref{line: in-sample: estimate} and Line~\ref{algo: in-sample use the estimate} with vertices in $W^- \subseteq [n]$ marked (by first checking if $j \in [N]$ satisfies $j \leq n$, and marking such a $j$ using $|W|$ queries).}

\If{$W = \emptyset$}
    \State{$S \gets t$ samples from uniform superposition over $V$}
    \State{Return $S$}
\Else
    \State{$\Tilde{w} \gets$ estimate of $|W^-|$ from Theorem~\ref{theo: quantum approx counting Aaronson Rall} with $\epsilon = 1/100, \delta =1/\polylog(N) = 1/\polylog(n)$}. \label{line: in-sample: estimate} 

    \State{$w' \gets \floor{\Tilde{w}/2}$}
    
    \State{$\Tilde{k} \gets $ $\floor{\left(\frac{\pi}{400\arcsin{\sqrt{w'/N}}} + \frac{1}{2}\right)}$} \label{line: in-sample: set number of iterations}

    \State{$R \gets \emptyset$}
    \State{$\mathsf{count} \gets 0$}
    \While{$\mathsf{count} < O(t~\polyloglog(n))$} \label{algo: in-sample while loop}
        \State{$\ket{\psi_i} \gets $ state obtained by applying Grover's iterate $\tilde{k}$ times on $\ket{\phi}$,
        with vertices in ${W}^-$ being the marked elements} \label{algo: in-sample use the estimate} 
        
        \State{$v_i \gets$ measurement outcome of $\ket{\psi_i}$ in computational basis} \label{algo: in-sample measure}

        \If{$v_i \in W^{-}$}\Comment{query edges between $v_i$ and $W$} \label{algo:in-smaple: check in W-}
            \State{$R \gets R \cup \cbra{v_i}$}
        \EndIf
        \State{$\mathsf{count} \gets \mathsf{count} + 1$}
        \If{$|R| = t$}\Comment{If we have collected enough samples}
    \State{Return $R$} \label{line: in-sample: return R}\Comment{This is a set of uniformly distributed and independent samples from $W^-$ of size $t$ (See Lemma~\ref{lemma: algo in sample})}
    \EndIf
    \EndWhile
\EndIf

\State{Return $[t]$} \label{line: in-sample algo: returns incorrect answer} \Comment{The algo makes error in this case.}
   
\caption{The In-Sample$(T, W, t)$ algorithm for sampling many uniformly independent samples from a subset of vertices}
\label{algo: in-smaple}
\end{algorithmic}
\end{algorithm}

\begin{lemma}
\label{lemma: algo high in degree test}
    Let $T \in \zone^{\binom{n}{2}}$ be a tournament, $W$ be a subset of $V$ satisfying $|W^-| \geq \log^{100}n$ and $u$ be a vertex in $V$.
    There exists a quantum algorithm \textnormal{Decide-High-Out-Degree}$(T, W, u)$, Algorithm~\ref{algo: decide high out degree}, that returns with error probability at most $1/(\polylog(n))$, $\mathsf{True}$ if the out-degree of $u$ in $W^-$ is at least $|W^-|/5$ and $\mathsf{False}$ if the out-degree of $u$ in $W^-$ is at most $|W^-|/10$. The query complexity of this algorithm is $O(|W| \cdot \sqrt{n}~\polylog(n))$.
\end{lemma}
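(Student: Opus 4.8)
The plan is to reduce this gap-decision problem to estimating, by sampling, the fraction of $W^-$ that consists of out-neighbours of $u$. Write $p = |N^+(u) \cap W^-|/|W^-|$ for this fraction; the two cases we must separate are $p \geq 1/5$ (out-degree at least $|W^-|/5$) and $p \leq 1/10$ (out-degree at most $|W^-|/10$), a constant multiplicative gap. To generate the samples I would invoke the subroutine \textnormal{In-Sample}$(T, W, t)$ of Lemma~\ref{lemma: algo in sample} with $t = \Theta(\log\log n)$, which (using the promise $|W^-| \geq \log^{100}n$) returns $t$ independent, uniformly distributed samples $v_1, \dots, v_t$ from $W^-$ at query cost $O(|W| \cdot \sqrt{n} \cdot \polyloglog(n))$. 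For each returned sample $v_i$ I would then query the single edge $\cbra{u, v_i}$ to evaluate the indicator $X_i = \II[u \to v_i]$, and output $\mathsf{True}$ if the empirical average $\hat{p} = \frac{1}{t}\sum_{i} X_i$ exceeds the threshold $3/20$, and $\mathsf{False}$ otherwise.

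For correctness, observe that conditioned on In-Sample succeeding, each $X_i$ is an independent Bernoulli variable of mean exactly $p$. The threshold $3/20$ sits at distance $1/20$ from $p$ in both the high case ($p \geq 1/5$) and the low case ($p \leq 1/10$), so a Hoeffding (or Chernoff) bound gives that $\hat{p}$ falls on the wrong side of the threshold with probability at most $\exp(-\Omega(t))$. Since $t = \Theta(\log\log n)$ and $\exp(-\Omega(\log\log n)) = 1/\polylog(n)$, the sampling step errs with probability $1/\polylog(n)$; the key quantitative point is that a constant gap requires only $O(\log\log n)$ samples to push the failure probability below $1/\log^{c}n$ for any fixed constant $c$, because $\log(\polylog n) = O(\log\log n)$. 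Combining this with the $1/\polylog(n)$ failure probability of In-Sample via a union bound keeps the total error at $1/\polylog(n)$.

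For the query cost, the single call to In-Sample dominates everything: it costs $O(|W| \cdot \sqrt{n} \cdot \polyloglog(n))$, and the subsequent $t = O(\log\log n)$ single-edge queries add only $O(\log\log n)$. Hence the overall cost is $O(|W| \cdot \sqrt{n} \cdot \polyloglog(n)) \subseteq O(|W| \cdot \sqrt{n} \cdot \polylog(n))$, matching the claimed bound.

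I expect the main obstacle to be bookkeeping rather than the idea itself. One must verify carefully that $t = \Theta(\log\log n)$ samples genuinely suffice to separate a constant multiplicative gap down to error $1/\polylog(n)$, and that the variables fed into the concentration bound really are independent and uniform over $W^-$ — a property guaranteed by Lemma~\ref{lemma: algo in sample} only \emph{conditioned} on In-Sample not failing, so the two error sources must be combined cleanly. A minor point to check is the degenerate case $u \in W^-$, where a sample could coincide with $u$ (for which $u \to u$ is not an edge); this happens with probability at most $1/|W^-| \leq 1/\log^{100}n$ and is absorbed harmlessly into the error budget.
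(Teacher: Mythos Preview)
Your argument is correct, but it is a genuinely different construction from the paper's. The paper's Algorithm~\ref{algo: decide high out degree} does not sample at all: it calls quantum approximate counting (Theorem~\ref{theo: quantum approx counting Aaronson Rall}) twice, once to obtain a multiplicative $(1\pm 1/100)$-estimate $\tilde w_1$ of $|W^-|$ and once to obtain a similar estimate $\tilde w_2$ of $|N^+(u)\cap W^-|$ (the marking oracle for the second call costs $|W|+1$ edge queries per invocation), and then thresholds the ratio $\tilde w_2/\tilde w_1$. Your route instead reuses In-Sample (Lemma~\ref{lemma: algo in sample}) as a black box to draw $t=\Theta(\log\log n)$ uniform points of $W^-$, then estimates $p$ classically by querying the $t$ edges $\{u,v_i\}$ and applying a Chernoff/Hoeffding bound. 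Both approaches yield the same query cost (indeed both are $O(|W|\sqrt{n}\cdot\polyloglog(n))$, within the stated $\polylog(n)$ slack) and the same $1/\polylog(n)$ error. The paper's version is more self-contained (it does not depend on the analysis of In-Sample), while yours is more modular and avoids having to separately estimate $|W^-|$. One small mismatch: the lemma as stated asserts the correctness of the specific Algorithm~\ref{algo: decide high out degree}, whereas you have exhibited a different algorithm with the same guarantees; for the purposes of the paper this is harmless, since only the guarantees are used downstream. Your remark about the event $v_i=u$ is unnecessary, incidentally: $\II[u\to u]=0$ and $u\notin N^+(u)$, so the mean of $X_i$ is exactly $p$ even when $u\in W^-$.
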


\begin{proof}
Consider Algorithm~\ref{algo: decide high out degree}. We first analyze the query cost of the algorithm. For the correctness, we define a `bad event', argue correctness of the algorithm conditioned on the bad event not occurring, and then upper bound the probability of the bad event happening.

\paragraph*{Query complexity}
The only queries used are in Line~\ref{algo: decide-high-out-deg estimate W-} and Line~\ref{algo: decide-high-out-deg estimate 2} of the algorithm. The query cost of these steps are upper bounded by $O(|W| \cdot \sqrt{n} \cdot \polylog(n))$ by Theorem~\ref{theo: quantum approx counting Aaronson Rall}.

\paragraph*{Bad event, and correctness assuming no bad event} The only bad event for Algorithm~\ref{algo: decide high out degree} are that either the estimates Line~\ref{algo: decide-high-out-deg estimate W-} or Line~\ref{algo: decide-high-out-deg estimate 2} is incorrect. Let us assume that the bad event has not happened. Then
\begin{align*}
        (1-1/100)|W^-| \leq \Tilde{w}_1 \leq (1+1/100)|W^-|,
\end{align*}
and
\begin{align*}
        (1-1/100)|N^+(u) \cap W^-| \leq \Tilde{w}_2 \leq (1+1/100)|N^+(u) \cap W^-|.
\end{align*}
We have
\begin{align*}
    \frac{99}{101} \cdot \frac{|N^+(u) \cap W^-|}{|W^-|} \leq \frac{\Tilde{w}_2}{\Tilde{w}_1} \leq \frac{101}{99} \cdot \frac{|N^+(u) \cap W^-|}{|W^-|}.
\end{align*}
Thus if $|N^+(u) \cap W^-|/|W^-| \geq 1/5$ then $\Tilde{w}_2/\Tilde{w}_1 \geq 99/505$ and if $|N^+(u) \cap W^-|/|W^-| \leq 1/10$ then $\Tilde{w}_2/\Tilde{w}_1 \leq 101/990$.

\paragraph*{Probability of bad event} 
By Theorem~\ref{theo: quantum approx counting Aaronson Rall} and a union bound, the probability of the bad event is upper bounded by $O(1/\polylog(n))$.
\end{proof}

\begin{algorithm}[h]
\begin{algorithmic}[1]
\State{\textbf{Input:} Query access to the edge directions of a tournament $T \in \zone^{\binom{n}{2}}$ where $V = [n]$, $W \subseteq V$ such that $|W^-| \geq \log^{100}n$, and $u \in V$.}

    \State{$\Tilde{w}_1 \gets $ estimate of $|W^-|$ using Theorem~\ref{theo: quantum approx counting Aaronson Rall} with $\epsilon = 1/100, \delta = 1/\polylog(n)$.} \label{algo: decide-high-out-deg estimate W-}
    \Statex{}\Comment{Since the algorithm is given $W$ is input, it can decide whether $v \in W^-$ by making $|W|$ queries.}
    
    \State{$\Tilde{w}_2 \gets $ estimate of $|N^+(u) \cap W^-|$ using Theorem~\ref{theo: quantum approx counting Aaronson Rall} with $\epsilon = 1/100, \delta = 1/\polylog(n)$.}
    \label{algo: decide-high-out-deg estimate 2}

    \Statex{}\Comment{Note that we do not have query access to the presence/absence of a vertex $v$ in $N^+(u) \cap W^-$. However such a query can be implemented with $1 + |W|$ queries: check if $v \rightarrow u$ is an edge, and check if $v \rightarrow w$ is an edge for any $w \in W$.}
\If{$\Tilde{w}_2/\tilde{w}_1 \geq 99/505$}
    \State{return \textsf{True}}
\Else
    \State{return \textsf{False}}
\EndIf
\caption{The Decide-High-Out-Degree$(T, W, u)$ subroutine}
\label{algo: decide high out degree}
\end{algorithmic}
\end{algorithm}

We now show our main result of this section.
\begin{theorem}
\label{theo: quantum algo main thm}
    Let $n > 0$ be a positive integer. Then $\sQ(\king_n) = O(\sqrt{n}~\polylog(n))$.
\end{theorem}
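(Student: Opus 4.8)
The plan is to assemble the quantum algorithm so that it mirrors the randomized algorithm of Theorem~\ref{theo: Randomized query upper-bound}, replacing each classical step with a quantum subroutine whose cost is $\widetilde{O}(\sqrt n)$ rather than $\widetilde{O}(n)$. Recall that the classical algorithm maintains a shrinking sub-tournament, repeatedly (i) samples a few vertices, (ii) finds one of high out-degree, and (iii) replaces the current vertex set by the in-neighbourhood of that vertex, before a final brute-force step on a small remaining set. The crucial observation is that after $i$ iterations the current sub-tournament is exactly $W^-$ for the set $W = \{w_1,\dots,w_i\}$ of high-out-degree vertices found so far, and since each iteration shrinks the vertex set by a constant factor there are only $O(\log n)$ iterations, so $|W| = O(\log n)$ throughout. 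This bound on $|W|$ is what makes all the subroutines affordable: by Observation~\ref{obs: mark out neighbour} we can mark membership in $W^-$ in superposition with only $|W| = O(\log n)$ queries.

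First I would set up the main loop. In each iteration I use In-Sample$(T, W, t)$ (Lemma~\ref{lemma: algo in sample}) with $t = \Theta(\log\log n)$ to draw $t$ uniformly random independent samples from the current sub-tournament $W^-$, at cost $O(|W|\sqrt n\,\polyloglog n) = \widetilde{O}(\sqrt n)$. For each sampled vertex $u$ I run Decide-High-Out-Degree$(T, W, u)$ (Lemma~\ref{lemma: algo high in degree test}) to test whether $u$ has out-degree at least $|W^-|/5$ inside $W^-$, again at cost $\widetilde{O}(\sqrt n)$. By Lemma~\ref{lemma: Out-degree of a random vertex}, a random vertex of $W^-$ has out-degree $\ge |W^-|/5$ with constant probability, so with $t = \Theta(\log\log n)$ samples I find such a vertex $w$ except with probability $1/\polylog(n)$; I then append $w$ to $W$, which replaces the current set $W^-$ by the in-neighbourhood of $w$ within it, shrinking it by a constant factor exactly as in the classical analysis. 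I repeat until $|W^-|$ drops below a polylogarithmic threshold, which happens after $O(\log n)$ iterations; each iteration costs $\widetilde{O}(\sqrt n)$, so the loop costs $\widetilde{O}(\sqrt n)$ in total.

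Next I handle the brute-force step, which is where the approach needs care, since the remaining sub-tournament $W^-$ is not explicitly known. Following the fourth bullet of the sketch, I arrange the loop to terminate once $|W^-| = O(\polylog n)$ (the stopping threshold can be calibrated so that the number of iterations stays $O(\log n)$). I then recover the actual vertex set of $W^-$ explicitly using Corollary~\ref{cor: multiple grover} with the parameter $k$ set to this polylogarithmic bound: since membership in $W^-$ is decidable with $|W|$ queries, invoking the corollary on the string marking $W^-$ returns all its $O(\polylog n)$ elements with cost $O(\sqrt{n\cdot\polylog n}\,\log\log n) = \widetilde{O}(\sqrt n)$ and error $1/\polylog(n)$. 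Once I hold the explicit vertex list, I query all $O(\polylog n)$ edges among them and find a king of $T[W^-]$ by brute force using $O(\polylog n)$ queries.

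Finally I argue correctness and aggregate the error. Conditioned on no subroutine failing, the returned vertex is a king of the final sub-tournament $W^- = (\{w_1,\dots,w_\ell\})^-$, and by iterating Lemma~\ref{lem: maurer king in inneighbour} along the chain $w_\ell, w_{\ell-1}, \dots, w_1$ it is a king of the original $T$ — this is the same telescoping argument used in Theorem~\ref{theo: Randomized query upper-bound}. For the error, each of the $O(\log n)$ iterations invokes $O(\log\log n)$ calls to In-Sample and Decide-High-Out-Degree, each failing with probability $1/\polylog(n)$, plus one call to Corollary~\ref{cor: multiple grover}; a union bound over all $\widetilde{O}(1)$ subroutine calls keeps the total error $o(1)$, which I can drive below $1/3$ by choosing the polylog exponents generously. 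Summing costs gives $O(\log n)\cdot\widetilde{O}(\sqrt n) + \widetilde{O}(\sqrt n) = O(\sqrt n\,\polylog(n))$, as claimed. The main obstacle I anticipate is the bookkeeping required to simultaneously guarantee that the loop makes only $O(\log n)$ iterations \emph{and} that it halts while $|W^-|$ is still $\ge \log^{100} n$ (the precondition of both lemmas) yet becomes $O(\polylog n)$ by the time we invoke the explicit-extraction step; reconciling these thresholds, and verifying that In-Sample truly yields independent uniform samples from $W^-$ so that the constant-probability success of each out-degree test can be amplified as claimed, is the delicate part of the argument.
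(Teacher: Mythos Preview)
Your proposal is correct and follows essentially the same approach as the paper: maintain the growing set $W$ with $|W|=O(\log n)$, use In-Sample and Decide-High-Out-Degree as the quantum replacements for the classical sampling and degree-check steps, and finish with a brute-force king-find on the explicitly extracted final set $W^-$ of polylogarithmic size. The one point you flag as an obstacle---how to detect that $|W^-|$ has dropped below the $\log^{100}n$ threshold while keeping the preconditions of Lemmas~\ref{lemma: algo in sample} and~\ref{lemma: algo high in degree test} satisfied---is exactly what the paper resolves by invoking Corollary~\ref{cor: multiple grover} with $k=\log^{100}n$ at the \emph{start} of every iteration (not just once at the end): if it returns fewer than $k$ indices then you already hold all of $W^-$ and break to the brute-force step, and otherwise you know $|W^-|\ge\log^{100}n$ so the two lemmas apply. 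This single call per iteration costs $O(|W|\sqrt{nk}\log\log n)=\widetilde{O}(\sqrt n)$ and cleanly handles the bookkeeping you were worried about.
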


\begin{proof}
Consider Algorithm~\ref{algo: main quantum algo}. We first analyze the query cost of the algorithm. For the correctness, we define `bad events', argue correctness of the algorithm conditioned on no bad event occurring, and then upper bound the probability of a bad event happening.

\paragraph*{Query complexity}
First we upper bound $|W|$ at the end of the run of the algorithm. The \textbf{while} loop in Line~\ref{line: main quantum algo: main while loop} runs for at most $O(\log n)$ iterations. The algorithm starts with $W$ initialized to $\emptyset$ and is updated only in Line~\ref{line: main quantum algo: update W} where one new element is added to $W$.
Thus we have $|W| = O(\log n)$. 

Consider Line~\ref{line: main quantum algo: get U}. Since $|W| = O(\log n)$ and $k = \log^{100} n$, by Corollary~\ref{cor: multiple grover} the number of queries in this step is upper bounded by $O(|W|\sqrt{n}~\polylog(n)) = O(\sqrt{n}~\polylog(n))$, and thus the overall cost of queries executed in this line over at most $O(\log n)$ iterations is also $O(\sqrt{n}~\polylog(n))$.

In Line~\ref{line: main quantum algo: invoke in-sample}, the In-Sample algorithm (Algorithm~\ref{algo: in-smaple}) is called at most $O(\log n)$ times with $t = \Theta(\log\log n)$ and $|W| = O(\log n)$. Thus by Lemma~\ref{lemma: algo in sample}, the cost of this step is upper bounded by $O(|W|\sqrt{n}~\polylog(n)) = O(\sqrt{n}~\polylog(n))$.

Now consider the \textbf{for} loop in Line~\ref{line: main quantum algo: for loop}.
This loop is executed at most $O(\log n)$ times and each iteration of this loop invokes the algorithm Decide-High-Out-Degree, with $|W| = O(\log n)$, at most $|S|$ many times. 
Since $|S| = O(\polylog(n))$ (see Lemma~\ref{lemma: algo in sample}) the query cost in this loop is upper bounded by $O(|W| \cdot |S| \cdot \sqrt{n}~\polylog(n)) = O(\sqrt{n}~\polylog(n))$ in the worst case. 

The only remaining step in Line~\ref{line: quantum return king U}. In this case, since $|U| \leq \log^{100} n$ throughout the algorithm, at most $O(\polylog(n))$ queries are made.

\paragraph*{Bad event, and correctness assuming no bad event}
If any of the following events happen, we say that a bad event has happened for Algorithm~\ref{algo: main quantum algo}:
\begin{enumerate}[(I)]
    \item\label{item: I} The algorithm in Corollary~\ref{cor: multiple grover} which is used in Line~\ref{line: main quantum algo: get U} gives an incorrect answer.
    \item\label{item: II} The algorithm In-Sample (Algorithm~\ref{algo: in-smaple}) in Line~\ref{line: main quantum algo: invoke in-sample} fails to return a set of $\Omega(t) = \Omega(\log\log n)$ uniformly distributed and independent samples from $W^-$. 

    \item\label{item: III} The set $S$ obtained from In-Sample in Line~\ref{line: main quantum algo: invoke in-sample} does not contain a vertex of out-degree at least $|W^-|/5$ in $W^-$.

    \item\label{item: IV} The algorithm Decide-High-Out-Degree (Algorithm~\ref{algo: decide high out degree}) in Line~\ref{line: main quantum algo: invoke Decide-High-Out-Degree} returns \textsf{False}.
\end{enumerate}
We prove the correctness of the algorithm assuming that these bad events do not happen.
Consider the $j$'th iteration of the \textbf{while} loop in Line~\ref{line: main quantum algo: main while loop}, for $j \geq 1$, and let $W^{(j)}$ denote the set $W$ in this iteration.
$W^{(j)}$ is updated only in Line~\ref{line: main quantum algo: update W} by a $v$ which satisfies $v \in (W^{(j)})^-$. 
This is because each vertex of the set $S$ belongs to $W^-$ (see Line~\ref{algo:in-smaple: check in W-} of Algorithm~\ref{algo: in-smaple}). 
In the next iteration of the \textbf{while} loop, $(W^{(j+1)})^-$ is defined as $(W^{(j)})^- \cap N^-(v)$.
Thus by applying Lemma~\ref{lem: maurer king in inneighbour} iteratively, $(W^{(j+1)})^-$ contains a king in the tournament $T[(W^{(j)})^-]$, and hence a king in $T$.

Assuming that the bad events do not happen, we now argue that in $O(\log n)$ iterations the size of $W^-$ becomes smaller than $\log^{100} n$. 
In this case $U = W^-$ because of the property of Corollary~\ref{cor: multiple grover} used in Line~\ref{line: main quantum algo: get U}, and the algorithm correctly returns the king in Line~\ref{line: quantum return king U} by a similar argument as in the previous paragraph by iteratively applying Lemma~\ref{lem: maurer king in inneighbour}. 
The analysis is similar to that of proof of Theorem~\ref{theo: Randomized query upper-bound}. 
Since Decide-High-Out-Degree (Algorithm~\ref{algo: decide high out degree}) in Line~\ref{line: main quantum algo: invoke Decide-High-Out-Degree} does not return \textsf{False}, the out-degree of $v$ in $(W^{(j)})^-$ must be at least $|(W^{(j)})^-|/10$.

Thus $|(W^{(j+1)})^-| \leq (9/10) \cdot |(W^{(j)})^-|$, and after $O(\log n)$ iterations the size of $W^-$ is smaller than $\log n < \log^{100}n$.

\paragraph*{Probability of bad event} 
The probability of events~\ref{item: I},~\ref{item: II},~\ref{item: IV} are each upper bounded by $O(1/\polylog(n))$ by Corollary~\ref{cor: multiple grover}, Lemma~\ref{lemma: algo in sample} and Lemma~\ref{lemma: algo high in degree test}, respectively. The probability of event~\ref{item: III} conditioned on \ref{item: II} not happening is upper bounded by $(2/5)^{\Theta(\log\log(n))} = O(1/\polylog(n))$, thus the probability of event~\ref{item: III} is upper bounded by $O(1/\polylog(n))$.
The number of times that the events~\ref{item: I},~\ref{item: II},~\ref{item: III} can happen is at most $O(\log n)$, and~\ref{item: IV} can happen is at most $O(\polylog(n))$, a union bound implies the probability of a bad event happening is upper bounded by $O(1/\polylog(n))$.
\end{proof}

\begin{algorithm}[h]
\begin{algorithmic}[1]
\State{\textbf{Input:} Query access to the edge directions of a tournament $T \in \zone^{\binom{n}{2}}$ with $V = [n]$}

\State{$W \gets \emptyset$, $t \gets \Theta(\log\log n)$, and $\textsf{COUNT} \gets O(\log n)$}
\Statex{}\Comment{Recall that $\emptyset^- := V$}
\While{$\textsf{COUNT} > 0$} \label{line: main quantum algo: main while loop}
    \State{$\textsf{COUNT} \gets \textsf{COUNT} - 1$}

    \State{$U \gets $ the output of the algorithm in Corollary~\ref{cor: multiple grover}} with the string in $\zone^{[n]}$ as input where indices corresponding to vertices in $W^-$ are equal to 1 (marked), and $k = \log^{100} n$ \label{line: main quantum algo: get U}
    \Statex{}\Comment{query access to this string can be done using $|W|$ edge queries to $T$}

    \If{$|U| < \log^{100} n$}
        \State{\textbf{break}}\Comment{Go to Line~\ref{line: quantum return king U}}\label{line: quantum break}
    \Else
    \State{$S \gets \textnormal{In-Sample}(T, W, t)$} \label{line: main quantum algo: invoke in-sample}
    \Comment{We reach here if $|W^-|$~$\geq\log^{100} n$ (Line~\ref{line: quantum break} gets executed otherwise)}
  
    \State{$S' \gets S$}
    \For{$v \in S$} \label{line: main quantum algo: for loop}
        \State{$S' \gets S' \setminus \cbra{v}$}
        \If{Decide-High-Out-Degree$(T,W,v) == \mathsf{True}$}\label{line: main quantum algo: invoke Decide-High-Out-Degree}
        \Statex{}\Comment{Decide-High-Out-Degree can be applied since $|W^-| \geq \log^{100}n$} 
            \State{$W \gets W \cup \cbra{v}$} \label{line: main quantum algo: update W}
            \State{\textbf{break}}\Comment{Go to Line~\ref{line: main quantum algo: main while loop}}
        \EndIf
        \If{$S' == \emptyset$}
            \State{Return a random vertex $v \in V$}
        \EndIf
    \EndFor
    \EndIf
\EndWhile
\State{Return a king in $U$}\label{line: quantum return king U}\Comment{query all edges in $T[U]$}
\caption{Quantum Algorithm}
\label{algo: main quantum algo}
\end{algorithmic}
\end{algorithm}

\section{Lower bounds}
We show our lower bounds in this section. We first show our lower bounds for the query complexity of finding a vertex of maximum out-degree, and then our lower bounds for finding a king in a tournament.

\subsection{Maximum out-degree}
We show in this subsection that the randomized query complexity of finding a vertex of maximum out-degree in an $n$-vertex tournament is $\Omega(n^2)$. This task is formally defined as the relation $\MOD_n \subseteq \zone^{\binom{n}{2}} \times [n]$: $(G, v) \in \MOD_n~\textnormal{if}~d^{+}(v) \geq d^{+}(w)~\forall w\neq v \in [n]$.
Here the out-degrees of $v, w$ are according to the tournament $G$.
\begin{theorem}\label{thm: mod}
    For sufficiently large positive integers $n$, $\sR(\MOD_n) \geq n^2/100$.
\end{theorem}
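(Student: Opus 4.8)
The plan is to use Yao's minimax principle, reducing the randomized lower bound to a distributional lower bound against deterministic algorithms. As sketched in the introduction, I would fix a regular tournament and perturb it by a single flipped edge, making the identity of the unique maximum-out-degree vertex depend on which edge was flipped. Specifically, take $n$ odd and fix an $n$-vertex regular tournament $T_0$ in which every vertex has out-degree exactly $(n-1)/2$; such a tournament exists (for instance the ``rotational'' tournament where $i \to j$ iff $j - i \bmod n \in \{1, \dots, (n-1)/2\}$). The hard distribution $\mathcal{D}$ is then: pick a uniformly random directed edge $e = (a \to b)$ of $T_0$ and flip it to $(b \to a)$, yielding a tournament $T_e$.

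The key structural observation is that flipping a single edge $(a \to b)$ raises $d^+(b)$ from $(n-1)/2$ to $(n+1)/2$ and lowers $d^+(a)$ from $(n-1)/2$ to $(n-3)/2$, while leaving all other out-degrees equal to $(n-1)/2$. Hence in $T_e$ the vertex $b$ (the new tail of the flipped edge) is the \emph{unique} vertex of maximum out-degree, so $\MOD_n$ has a unique correct answer on each input in the support of $\mathcal{D}$. I would then argue that identifying $b$ essentially forces the algorithm to locate the flipped edge: a deterministic algorithm that queries an edge $\{i,j\}$ of $T_0$ learns whether $\{i,j\} = e$ (the queried direction disagrees with $T_0$) or not, but a single query to a non-flipped edge gives no information distinguishing among the remaining $\binom{n}{2}-1$ candidate edges. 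Formally, conditioned on the answers to a set $Q$ of queried edges all agreeing with $T_0$, the posterior on $e$ is uniform over the edges not in $Q$, and the correct output $b$ is therefore near-uniform over a large vertex set, so the algorithm cannot succeed.

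I would make this precise via a counting/charging argument. Fix any deterministic algorithm making at most $n^2/100$ queries. For each input $T_e$, either the algorithm queries the flipped edge $e$ at some point, or it does not. If it never queries $e$, then its transcript is identical to the transcript on the base tournament $T_0$ (every queried edge agrees with $T_0$), so it produces a fixed output vertex $w_0$ independent of $e$; this output is correct only for the single value of $e$ whose tail is $w_0$, i.e. only when $e$ is one of the at most $(n-1)/2$ edges out of $w_0$ in $T_0$. Since there are $\binom{n}{2}$ choices of $e$, the number of inputs on which the algorithm both avoids querying $e$ and answers correctly is at most $(n-1)/2$. The number of inputs on which the algorithm does query $e$ is at most the number of queries, namely $n^2/100$, because on $T_0$ (and hence on the run of the deterministic algorithm up to the first disagreement) the query set is determined; more carefully, one bounds the probability over $e \sim \mathcal{D}$ that $e$ lies in the set of at most $n^2/100$ edges the algorithm would query along the all-agree path.

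Combining, the success probability under $\mathcal{D}$ is at most
\begin{align*}
    \Pr_{e \sim \mathcal{D}}[\text{correct}] \leq \frac{n^2/100}{\binom{n}{2}} + \frac{(n-1)/2}{\binom{n}{2}} = \frac{n^2/100}{\binom{n}{2}} + o(1),
\end{align*}
which is bounded away from $2/3$ for large $n$ since $\binom{n}{2} \sim n^2/2$ makes the first term roughly $1/50$. By Yao's minimax principle this distributional bound against deterministic algorithms making $n^2/100$ queries implies $\sR(\MOD_n) \geq n^2/100$. The main obstacle, and the step needing the most care, is the charging argument bounding the measure of inputs whose flipped edge is actually queried: one must argue that along the canonical ``all answers agree with $T_0$'' computation path the algorithm queries a \emph{fixed} set of at most $n^2/100$ edges, so that $e$ being queried is exactly the event $e \in Q$ for this fixed $Q$, which has probability at most $(n^2/100)/\binom{n}{2}$ under the uniform choice of $e$. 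I would state this cleanly as a lemma about deterministic decision trees run on single-bit perturbations of a fixed input before assembling the final bound.
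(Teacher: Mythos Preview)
Your approach is correct and essentially identical to the paper's: both use Yao's principle with the same hard distribution (a fixed regular tournament with one uniformly random edge flipped), and both analyze the deterministic tree by focusing on the root-to-leaf path consistent with the unperturbed tournament. The paper counts error inputs (unqueried edges not incident on the leaf's label), while you equivalently count success inputs (queried edges plus edges whose flip makes the fixed output correct); these are complementary bookkeeping for the same argument. One minor slip: when you say the output $w_0$ is correct only for edges ``out of $w_0$ in $T_0$'', it should be edges \emph{into} $w_0$ (flipping $a\to w_0$ to $w_0\to a$ is what raises $d^+(w_0)$), but since the tournament is regular this does not affect your count of $(n-1)/2$.
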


We use Yao's minimax principle~\cite{yao1977probabilistic}, stated below in a form convenient for us.
\begin{lemma}[Yao's minimax principle]\label{lem: yao}
    For a relation $f \subseteq \zone^m \times \cR$, we have $\sR(f) \geq k$ if and only if there exists a distribution $\mu : \zone^m \to [0,1]$ such that $\sD_\mu(f) \geq k$.
    Here, $\sD_\mu(f)$ is the minimum depth of a deterministic decision tree that computes $f$ to error at most $1/3$ when inputs are drawn from the distribution $\mu$.
\end{lemma}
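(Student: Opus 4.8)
The plan is to prove the two implications separately; conceptually, together they amount to the identity $\sR(f) = \max_\mu \sD_\mu(f)$ (the maximum being attained because $\sD_\mu(f)$ takes values in the finite set $\set{0,1,\dots,m}$ when $f$ is total). The forward (``if'') implication is the easy direction and follows by averaging. Suppose $\mu$ satisfies $\sD_\mu(f) \ge k$, and let $\cA$ be an optimal randomized algorithm for $f$, i.e.\ a distribution $\mathcal{D}_{\cA}$ over deterministic trees each of depth at most $\sR(f)$ with $\Pr_{T \sim \mathcal{D}_{\cA}}[(x,T(x)) \notin f] \le 1/3$ for every $x \in \zone^m$. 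Taking the expectation of this bound over $x \sim \mu$ and interchanging the two finite expectations gives
\[
\Exp_{T \sim \mathcal{D}_{\cA}}\ \Exp_{x \sim \mu}\ \II[(x,T(x)) \notin f] \;=\; \Exp_{x \sim \mu}\ \Exp_{T \sim \mathcal{D}_{\cA}}\ \II[(x,T(x)) \notin f] \;\le\; 1/3 .
\]
Hence some tree $T_0$ in the support of $\mathcal{D}_{\cA}$ has $\Exp_{x \sim \mu}\II[(x,T_0(x)) \notin f] \le 1/3$; that is, $T_0$ computes $f$ to error at most $1/3$ under $\mu$ and has depth at most $\sR(f)$. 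Therefore $\sD_\mu(f) \le \sR(f)$, and since $\sD_\mu(f) \ge k$ we conclude $\sR(f) \ge k$.

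For the reverse (``only if'') implication I argue the contrapositive: assuming $\sD_\mu(f) \le k-1$ for every distribution $\mu$, I will produce a randomized algorithm of worst-case depth at most $k-1$ and error at most $1/3$, whence $\sR(f) \le k-1 < k$. Consider the two-player zero-sum game in which the algorithm player picks a deterministic decision tree $T$ of depth at most $k-1$, the adversary picks an input $x \in \zone^m$, and the payoff to the adversary is the error indicator $\II[(x,T(x)) \notin f]$. Since $\zone^m$ is finite and each depth-$(k-1)$ tree partitions $\zone^m$ into at most $2^{k-1}$ leaf-classes, only finitely many such trees are distinguishable by their behaviour on $\zone^m$, so the game has finite pure-strategy sets and von Neumann's minimax theorem applies:
\[
\min_{p}\ \max_{x \in \zone^m}\ \Exp_{T \sim p}\ \II[(x,T(x)) \notin f] \;=\; \max_{\mu}\ \min_{T}\ \Exp_{x \sim \mu}\ \II[(x,T(x)) \notin f],
\]
where $p$ ranges over distributions on depth-$(k-1)$ trees and $\mu$ over distributions on inputs. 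The inner minimum on the right-hand side is exactly the least error achievable by a depth-$(k-1)$ deterministic tree against $\mu$, which by the assumption $\sD_\mu(f) \le k-1$ is at most $1/3$ for every $\mu$; hence the right-hand side, and therefore the left-hand side, is at most $1/3$. An optimal $p$ attaining the left-hand side is then a randomized algorithm whose trees all have depth at most $k-1$ and whose error is at most $1/3$ on every input, giving $\sR(f) \le k-1$.

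The main obstacle is the reverse direction, and specifically the two points needed to invoke the minimax theorem legitimately. First, one must reduce to a finite game by arguing that only finitely many depth-$(k-1)$ trees matter up to their input--output behaviour on the finite domain $\zone^m$; this step is needed precisely because the codomain $\cR$ may a priori be infinite, so one cannot naively enumerate trees. Second, one must correctly match the game value to the two complexity measures: that the right-hand optimum equals $\max_\mu \sD_\mu(f)$ after translating ``error at most $1/3$'' into the statement that the inner minimum is at most $1/3$, and that the left-hand optimum, attained because the game is finite, yields a genuine bounded-error randomized algorithm of the required worst-case depth. The forward direction, by contrast, is the routine swap of finite expectations followed by an averaging argument carried out above.
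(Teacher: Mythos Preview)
The paper does not give a proof of this lemma at all: it is stated with a citation to Yao's original paper~\cite{yao1977probabilistic} and used as a black box. Your write-up supplies the standard textbook proof --- the easy direction by averaging, the hard direction via von Neumann's minimax theorem on the finite zero-sum game between depth-$(k-1)$ trees and inputs --- and it is correct. Your remark that one must pass to finitely many trees (by identifying trees with the same error pattern on the finite domain $\zone^m$, which suffices since the payoff depends only on $\II[(x,T(x))\notin f]$) is exactly the care needed to make the invocation of minimax legitimate when $\cR$ is not assumed finite.
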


\begin{proof}[Proof of Theorem~\ref{thm: mod}]
Assume without loss of generality that $n$ is odd. We construct a hard distribution $\mu$ on $n$-vertex tournaments. We show that any deterministic query algorithm of cost less than $n^2/100$ must make error at least $1/3$ on inputs drawn from $\mu$, and this would prove the theorem by Yao's principle (Lemma~\ref{lem: yao}). Let $G$ be a fixed $n$-vertex regular tournament where every vertex has out-degree exactly $(n-1)/2$ (such a tournament is easy to construct, by induction, for example). The distribution $\mu$ is defined by taking $G$ and flipping the direction of a uniformly random edge. Note that all resultant tournaments have a unique vertex with maximum out-degree.

Consider a deterministic query algorithm (decision tree) that queries less than $n^2/100$ edges. Consider the leaf $L$ of this tree for which answers of all queries on its path are consistent with directions of edges in $G$. Say the label of this leaf is vertex $i$. Consider the set $S$ of all unqueried edges on the path to $L$ that are not incident on vertex $i$. We have $|S| \geq \binom{n}{2} - \frac{n^2}{100} - (n-1)$.
For each $e \in S$, the graph $G_e$ defined by flipping the direction of $e$ in $G$ reaches the leaf $L$. Moreover, the unique maximum out-degree vertex of $G_e$ is not vertex $i$ since $e$ is not incident on $i$ by the definition of $S$. This implies that the tree outputs the wrong answer on $G_e$. By the definition of $\mu$, we have $\mu(G_e) = 1/\binom{n}{2}$ for all $e \in S$. Thus, the mass of inputs under $\mu$ on which the decision tree makes an error is at least
\[
\sum_{e \in S}\mu(G_e) \geq \frac{\binom{n}{2} - \frac{n^2}{100} - n + 1}{\binom{n}{2}} > \frac{97}{100} > \frac{1}{3},
\]
where the second-to-last inequality holds for sufficiently large $n$. Lemma~\ref{lem: yao} yields the theorem.
\end{proof}

We now give our quantum bounds for $\MOD_n$.

\begin{theorem}\label{theo: MOD quantum bounds}
    For all positive integers $n$, $\sQ(\MOD_n) = O(n^{3/2}), \sQ(\MOD_n) = \Omega(n)$.
\end{theorem}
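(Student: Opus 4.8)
The goal is to establish two bounds for $\MOD_n$ in the quantum model: an upper bound of $O(n^{3/2})$ and a lower bound of $\Omega(n)$. These are largely independent and I would handle them separately.

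\textbf{Upper bound.} The plan is to reduce finding a maximum out-degree vertex to a quantum maximum-finding problem over the list of out-degrees. First observe that for any fixed vertex $v$, its out-degree $d^+(v) = |N^+(v)|$ can be computed exactly by querying all $n-1$ edges incident on $v$. Now consider the list $(d^+(1), \dots, d^+(n))$ of length $n$; the task is precisely to return an index achieving the maximum entry. I would invoke the D\"urr--H\o yer maximum-finding routine (Theorem~\ref{theo: quantum maximum finding}), which finds the maximum of an $n$-item unsorted table in $O(\sqrt{n})$ accesses to the table. The subtlety is that each ``access'' to an entry $d^+(i)$ is not a single oracle call but requires computing the out-degree, costing $n-1$ edge queries. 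The maximum-finding algorithm fits the standard framework where each comparison/evaluation is implemented by a subroutine; since each of the $O(\sqrt{n})$ effective table accesses is realized by an $O(n)$-query computation of an out-degree, the total query complexity is $O(\sqrt{n}) \cdot O(n) = O(n^{3/2})$. I would briefly note that the success probability $2/3$ of Theorem~\ref{theo: quantum maximum finding} carries over directly, since computing out-degrees is exact.

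\textbf{Lower bound.} For the $\Omega(n)$ lower bound, I would mirror the structure of the randomized lower bound in Theorem~\ref{thm: mod}, but reduce from the quantum Search problem rather than use Yao's principle. The idea is to embed an instance of Search on an $\binom{n}{2}$-bit string into $\MOD_n$. Starting from a fixed regular tournament $G$ on $n$ (odd) vertices where every vertex has out-degree $(n-1)/2$, flipping the direction of any single edge $e$ creates a tournament with a \emph{unique} maximum out-degree vertex, namely the tail endpoint of the flipped edge in $G$. Thus an algorithm solving $\MOD_n$ identifies which edge was flipped (or at least its relevant endpoint), and this in turn locates the marked bit among the $\binom{n}{2}$ edge-variables. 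Since the quantum query complexity of Search on an $N$-bit string with a promised unique marked element is $\Omega(\sqrt{N})$ by Theorem~\ref{theo: grover tight}, and here $N = \binom{n}{2} = \Theta(n^2)$, we obtain a lower bound of $\Omega(\sqrt{n^2}) = \Omega(n)$ for $\MOD_n$.

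\textbf{Main obstacle.} The cleanest part is the upper bound; the one point needing care is confirming that the flipped edge is uniquely recoverable from the output of $\MOD_n$ in the lower bound reduction. Flipping an edge $e = \{a, b\}$ (say reorienting $a \to b$ to $b \to a$) raises $d^+(b)$ to $(n+1)/2$ and lowers $d^+(a)$ to $(n-3)/2$, while all other out-degrees remain $(n-1)/2$; hence the maximum out-degree vertex is exactly $b$, which pins down one endpoint of $e$ but not $e$ itself. I would therefore need to argue the reduction at the level of identifying the marked endpoint rather than the marked edge---or alternatively set up the Search promise so that the marked variable corresponds to an oriented edge incident on a distinguished vertex, so that recovering the max-out-degree vertex is equivalent to solving Search. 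Handling this correspondence carefully---so that the promise structure of the Search instance aligns exactly with the tournaments in the support of the construction---is the main technical point, but it is routine once the indexing is fixed, and yields the stated $\Omega(n)$ bound.
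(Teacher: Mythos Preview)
Your upper bound is exactly the paper's argument and is fine.

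For the lower bound, you have the right embedding (XOR a Hamming-weight-$\le 1$ string into a fixed regular tournament $G$) and you correctly identify the obstacle: the output of a $\MOD_n$ algorithm is only a \emph{vertex}, not the flipped \emph{edge}, so it does not directly solve Search on $\binom{n}{2}$ bits. However, neither of your two proposed fixes yields $\Omega(n)$. If you ``argue the reduction at the level of the marked endpoint'' you are effectively reducing from a search problem with only $n$ possible answers, and if you restrict the promise to edges incident on a distinguished vertex you are reducing from Search on $n-1$ bits; in both cases Theorem~\ref{theo: grover tight} gives only $\Omega(\sqrt{n})$.

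The paper's resolution is different and simple: after running the hypothetical $o(n)$-query $\MOD_n$ algorithm on $G \oplus x$ to obtain a vertex $v$, run an additional Grover search (Theorem~\ref{theo: Grover search}) over the $n-1$ coordinates of $x$ indexed by edges incident on $v$. Since the flipped edge is adjacent to the unique maximum-out-degree vertex, this second step finds the marked index of $x$ with high probability and costs only $O(\sqrt{n})$ extra queries. The combined procedure solves Search on $\binom{n}{2}$ bits in $o(n) + O(\sqrt{n}) = o(n)$ queries, contradicting the $\Omega\bigl(\sqrt{\binom{n}{2}}\bigr) = \Omega(n)$ lower bound. This extra Grover step is the missing ingredient in your proposal.
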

\begin{proof}
    For the upper bound we apply the maximum finding subroutine in Theorem~\ref{theo: quantum maximum finding} to the degree sequence of the input tournament. Finding the degree of a vertex (and hence a query of the maximum-finding algorithm) can be done with $n-1$ edge queries. Thus, this algorithm has cost $O(\sqrt{n}\cdot(n - 1)) = O(n^{3/2})$.

    For the lower bound, we give a reduction from the Search problem on an $\binom{n}{2}$-bit string. As in the proof of the randomized lower bound, assume $n$ is odd and let $G$ be a fixed $n$-vertex regular tournament where every vertex has out-degree exactly $(n-1)/2$. Towards a contradiction, suppose we have an algorithm $\cA$ that finds a maximum out-degree vertex in an $n$-vertex graph with query complexity $o(n)$ and probability at least $2/3$. We use $\cA$ to solve the Search problem on $\binom{n}{2}$-bit strings with the promise that the input has Hamming weight at most 1. On input $x \in \zone^{\binom{n}{2}}$ with $|x| \leq 1$, do the following:
    \begin{enumerate}
        \item Run the algorithm $\cA$ on the tournament $G \oplus x$. Here $G \oplus x$ denotes the bitwise XOR of $G$ and $x$. Suppose the output is $v \in [n]$.
        \item Run a $(99/100)$-error Search algorithm with query cost $O(\sqrt{n})$ on the $n-1$ indices of $x$ that are indexed by pairs with one element as $v$ (that is, indexed by the edges adjacent to $v$ in the corresponding tournament).
        \item Output the index returned by the search algorithm.
    \end{enumerate}
    The cost of this algorithm is clearly $o(n) + O(\sqrt{n})$. For the correctness, first note that when $|x| = 1$ and $G$ is such that all out-degrees are equal, the tournament $G \oplus x$ has exactly one maximum out-degree vertex. Thus, by the correctness of $\cA$, it outputs this vertex with probability at least $2/3$. Observe that the edge flipped in $G \oplus x$ from $G$ is adjacent to this vertex. In the event that the first step outputs the correct vertex, the edge that has been flipped in $G \oplus x$ from $G$ (i.e., the index $\cbra{i,j}$ with $x_{\cbra{i,j}} = 1$) is caught in the second step with probability at least $99/100$. Thus, this gives an algorithm solving the Search problem on $\binom{n}{2}$-bit strings with the promise that the input has Hamming weight at most 1, with success probability at least $(2/3)\cdot(99/100) > 3/5$. The query cost of this algorithm is $o(n)$ from the first step, by our assumption, and $O(\sqrt{n})$ from the second step. Thus the total cost is $o(n)$, which is a contradiction in view of Theorem~\ref{theo: grover tight}.
\end{proof}
We leave open the question of closing the gap in Theorem~\ref{theo: MOD quantum bounds}.

\subsection{Finding a king}
We show an $\Omega(n)$ lower bound for the randomized query complexity of finding a king in a tournament, and an $\Omega(\sqrt{n})$ quantum query lower bound. To show these lower bounds, we restrict our attention on input tournaments of a particular structured form that have the property that there is only one king (which is a source in the tournament). We then show a lower bound on the randomized and quantum query complexities of finding a king in these promised inputs, by a reduction from the Search problem on $n-1$ variables with the promise that the input has Hamming weight either 0 or 1, for which we know an $\Omega(n)$ lower bound in the randomized setting and an $\Omega(\sqrt{n})$ lower bound in the quantum setting. Our reductions use a simple modification of block sensitivity.

We require the following relation.
\begin{defi}
Let $n$ be a positive integer. Define the relation $\usearch_n \subseteq \zone^n \times \cbra{\emptyset} \cup [n]$ as 
$(0^n, \emptyset) \in \usearch_n$ and $(x, i) \in \usearch_n~\textnormal{when}~x = e_i$.
\end{defi}

\begin{claim}\label{claim: or reduction}
Let $n$ be a positive integer. Then,
\begin{align*}
\sR(\king_n) & \geq \sR(\usearch_{n-1}), \qquad
\sQ(\king_n) \geq \sQ(\usearch_{n-1}).
\end{align*}
\end{claim}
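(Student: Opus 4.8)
The plan is to prove the claim by an explicit reduction that embeds an instance of $\usearch_{n-1}$ into a structured family of tournaments, exactly as sketched in the lower-bound overview. First I would fix an arbitrary reference tournament $T$ on $n$ vertices in which vertex $n$ is a source, that is, $n \rightarrow u$ for every $u \in [n-1]$. Since vertex $n$ beats everyone directly, it is trivially a king; moreover I claim it is the \emph{unique} king of $T$, since no other vertex $v$ can reach $n$ in at most two steps (the only edge incident to $n$ that is not outgoing would have to come from $v$, but $n$ is a source, so every path from $v$ to $n$ would need length at least... ). I would verify uniqueness carefully here, as it underpins the whole reduction.

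Next, for each index $i \in [n-1]$ I define the tournament $T_i$ obtained from $T$ by flipping exactly the $n-1$ edges incident on vertex $i$, so that $i$ becomes a source in $T_i$. By the same argument as above, vertex $i$ is then the unique king of $T_i$. The crucial combinatorial observation is that the edge-sets being flipped, namely the stars $\{\{i,j\} : j \neq i\}$ for distinct $i$, are \emph{not} disjoint (edge $\{i,j\}$ is flipped both when we toggle $i$'s star and when we toggle $j$'s star). To get genuinely disjoint blocks I would instead associate to each $i \in [n-1]$ only the edges between $i$ and the source vertex $n$ together with whatever minimal set of flips makes $i$ a source while leaving $n$'s status among the remaining vertices controlled; the cleanest route is to use the ``modification of block sensitivity'' the authors allude to. Concretely I would define a map $\Phi : \zone^{n-1} \to \zone^{\binom{n}{2}}$ sending $0^{n-1} \mapsto T$ and $e_i \mapsto T_i$, where toggling the $i$-th input bit toggles a fixed block $B_i \subseteq \binom{[n]}{2}$ of edges, with the $B_i$ pairwise disjoint. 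Under the promise that the input has Hamming weight at most $1$, $\Phi$ is well defined on all promise inputs, and the unique king of $\Phi(x)$ is vertex $n$ when $x = 0^{n-1}$ and vertex $i$ when $x = e_i$.

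With $\Phi$ in hand, the reduction is immediate: given a (randomized or quantum) algorithm computing $\king_n$, I simulate it on $\Phi(x)$ using the input $x$. Each query the king-algorithm makes to an edge of $\Phi(x)$ can be answered with at most one query to $x$: if the queried edge lies in some block $B_i$ then its direction in $\Phi(x)$ is $T$'s direction XORed with $x_i$, obtainable with a single query to $x_i$; if the queried edge lies in no block, its direction equals $T$'s fixed direction and needs no query. This simulation preserves query complexity up to the constant factor $1$ (and carries over verbatim to the quantum setting, since a single controlled query to $x$ implements the phase/XOR edge oracle for $\Phi(x)$). Finally I would translate the king-output into a $\usearch_{n-1}$ output: if the algorithm returns vertex $n$, output $\emptyset$; if it returns vertex $i \in [n-1]$, output $i$. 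By uniqueness of the king this map is correct with the same error probability, giving $\sR(\king_n) \geq \sR(\usearch_{n-1})$ and $\sQ(\king_n) \geq \sQ(\usearch_{n-1})$.

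The main obstacle I anticipate is arranging the flip-blocks $B_i$ to be pairwise disjoint while still forcing both the intended source and a unique king; the naive ``flip all edges incident to $i$'' construction has overlapping stars, so I would need to either start from a $T$ with additional structure (e.g.\ a transitive-like backbone on $[n-1]$ together with the source $n$) so that a single edge-flip per index suffices to change the king, or formalize the disjointness via the block-sensitivity framing the authors mention. Once disjointness and uniqueness are nailed down, the query-simulation and output-translation steps are routine and cost nothing asymptotically.
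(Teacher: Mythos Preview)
Your overall strategy matches the paper's, but you have not actually resolved the obstacle you yourself flag at the end, and that resolution is essentially the whole proof. Two points.

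First, your definition of $T_i$ is incorrect: flipping \emph{all} $n-1$ edges incident to $i$ does not make $i$ a source unless $i$ was a sink to begin with; it merely swaps $i$'s in-neighbours and out-neighbours. To make $i$ a source you must flip only the edges currently pointing \emph{into} $i$.

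Second, and this is the missing idea, once you take $B_i$ to be the set of in-edges of $i$ in $T$ (for $i \in [n-1]$), disjointness is automatic: a directed edge $\ell \to m$ is an in-edge of exactly one of its endpoints, namely $m$, so it lies in $B_m$ and in no other $B_j$. (Since $n$ is a source, every edge has its head in $[n-1]$, so the $B_i$ even partition the entire edge set.) Flipping $B_i$ makes $i$ the unique source of the resulting tournament: vertex $n$ loses the edge $n \to i$ and so is no longer a source, while every other $j \neq i$ still has the edge $n \to j$ pointing into it. Hence $i$ is the unique king. This is exactly the paper's construction; with it, your simulation and output-translation paragraphs go through verbatim, and no additional structure on $T$ is needed beyond vertex $n$ being a source.
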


\begin{proof}
Consider an arbitrary input $x \in \zone^{\binom{n}{2}}$ such that the vertex $n$ is the source. For each $j \in [n-1]$, let $V_j \subseteq \left[\binom{n}{2}\right]$ be the set of edges incident on vertex $j$ that need to be flipped in the input $x$ to make vertex $j$ the source. We first make the following two observations:
\begin{align}
V_j \cap V_k & = \emptyset \quad \forall j \neq k \in [n-1], \qquad \bigcup_{i = 1}^{n-1} V_j = \left[\binom{n}{2}\right].
\end{align}
The first observation follows by considering an edge from vertex $\ell$ to vertex $m$. This edge only appears in $V_m$. Clearly every edge belongs to exactly one $V_j$, proving the second observation.

Using these two observations, the input set $\zone^{\binom{n}{2}}$ can also be expressed as $\zone^{V_1} \times \zone^{V_2} \times \cdots \times \zone^{V_{n-1}}$. For the remaining part of this proof we treat inputs to be of the latter form. In fact, we only restrict our attention to the case where each coordinate in a `block' has the same value.

For a string $y \in \zone^{n - 1}$, define the tournament
$x_y = \bigotimes_{i = 1}^{n - 1}y^{V_{i}}_i$.
Thus we have the following tournaments when $|y| \leq 1$:
\[
x_{e_j} = \begin{cases}0^{V_1} \times \cdots \times 0^{V_{j-1}} \times 1^{V_{j}} \times 0^{V_{{j + 1}}} \times \cdots \times 0^{V_{n-1}} & y = e_{j - 1}\\
0^{V_1} \times \cdots \times 0^{V_{n-1}} & y = 0^{n - 1}.
\end{cases}
\]

In other words, $x_{e_j}$ equals the tournament $x$ with variables in $V_{j}$ flipped, and $x_{0^{n - 1}} = x$.

Note that vertex $j$ is the source (and thus the unique king) in the tournament $x_{e_j}$. Thus, finding a king in the set of tournaments $\cbra{x_{e_j} : j \in [n-1]}$ is the same as finding a source in these tournaments. 
Thus, a query algorithm finding a king in the restricted input set $x_y : |y| \leq 1$ yields a query algorithm for $\usearch_{n-1}$ on input $y$, which proves the claim.
\end{proof}

From the well-known lower bounds of $\sQ(\usearch_{n-1}) = \Omega(\sqrt{n})$~\cite{BBBV97} and $\sR(\usearch_{n-1}) = \Omega(n)$, we obtain our main theorem of this section.
\begin{theorem}
Let $n$ be a positive integer and $\king_n \subseteq \zone^{\binom{n}{2}} \times [n]$. Then,
\begin{align*}
\sR(\king_n) = \Omega(n), \qquad \sQ(\king_n) = \Omega(\sqrt{n}).
\end{align*}

\end{theorem}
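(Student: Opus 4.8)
The plan is to derive the theorem directly from Claim~\ref{claim: or reduction} together with the standard lower bounds on the randomized and quantum query complexities of unstructured search on inputs of Hamming weight at most $1$. Claim~\ref{claim: or reduction} already supplies the two reductions $\sR(\king_n) \geq \sR(\usearch_{n-1})$ and $\sQ(\king_n) \geq \sQ(\usearch_{n-1})$, so it suffices to lower bound $\sR(\usearch_{n-1})$ and $\sQ(\usearch_{n-1})$ separately and then substitute.

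For the quantum bound, I would appeal to Theorem~\ref{theo: grover tight}, which asserts that any bounded-error quantum algorithm solving Search on $m$-bit inputs promised to have Hamming weight either $0$ or $1$ requires $\Omega(\sqrt{m})$ queries. Since $\usearch_{n-1}$ is exactly this promise problem on $m = n-1$ bits, we obtain $\sQ(\usearch_{n-1}) = \Omega(\sqrt{n})$, and hence $\sQ(\king_n) = \Omega(\sqrt{n})$ by the reduction. No further work is needed here beyond matching the promise.

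For the randomized bound, I would establish $\sR(\usearch_{n-1}) = \Omega(n)$ via Yao's minimax principle (Lemma~\ref{lem: yao}). The natural hard distribution $\mu$ places mass $1/(n-1)$ on each $e_i$, $i \in [n-1]$ (optionally mixing in $0^{n-1}$). Any deterministic tree of depth $o(n)$ leaves a constant fraction of coordinates unqueried on the all-zeros path; flipping any single such coordinate is consistent with that path yet changes the correct output, so the tree must err on a constant fraction of $\mu$'s mass. This yields $\sD_\mu(\usearch_{n-1}) = \Omega(n)$, hence $\sR(\usearch_{n-1}) = \Omega(n)$, and therefore $\sR(\king_n) = \Omega(n)$ by the reduction.

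The argument has essentially no obstacle, and the only point requiring care is that both lower bounds must hold for the promise version with $|y| \leq 1$ rather than for general inputs. Theorem~\ref{theo: grover tight} is stated precisely for this promise, so the quantum side is immediate; on the randomized side, the distribution $\mu$ above is itself supported on promise inputs, so the distributional argument transfers without modification. Combining the two reductions from Claim~\ref{claim: or reduction} with these two bounds completes the proof.
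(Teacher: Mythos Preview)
Your proposal is correct and follows essentially the same approach as the paper: invoke Claim~\ref{claim: or reduction} to reduce to $\usearch_{n-1}$, then appeal to the standard $\Omega(\sqrt{n})$ quantum lower bound (Theorem~\ref{theo: grover tight}) and the standard $\Omega(n)$ randomized lower bound for promise search. The paper simply cites these bounds as well known, whereas you additionally sketch the Yao argument for the randomized case; this extra detail is fine but not required.
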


\bibliography{reference}

\end{document}